\documentclass[12pt,a4paper]{article}

\usepackage[T1]{fontenc}
\usepackage{setspace}
\onehalfspacing
\usepackage{bm}
\usepackage{varwidth}
\usepackage{amsthm}
\usepackage{mathrsfs}  
\usepackage{microtype}
\usepackage[hmargin=2.5cm, vmargin=2.5cm]{geometry}
\usepackage[inline]{enumitem}
\setlist{nosep} 
\usepackage[titletoc,title]{appendix}
\usepackage{graphicx}
\usepackage{amssymb,amsmath}
\usepackage{dsfont}
\usepackage[official]{eurosym}
\usepackage{natbib}
\usepackage[
            colorlinks=false,hidelinks,
            citecolor=black ]{hyperref} 
\usepackage{csquotes}            
\usepackage{caption}
\usepackage{pdfrender,xcolor}   
\usepackage{adjustbox}
\usepackage[flushleft]{threeparttable}
\usepackage{siunitx}
\usepackage{subcaption}
\usepackage{booktabs}
\usepackage{bigints}
\usepackage{color}
\color{black}
\newlist{mylist}{enumerate*}{1}
\setlist[mylist]{label=(\roman*)}   
        
\setlength{\abovecaptionskip}{0pt} 

\usepackage[most]{tcolorbox}
\newtcolorbox{highlighted}{colback=yellow,coltext=red,breakable}

\numberwithin{equation}{section}
\newtheorem{example}{Example}[section]

\newtheorem{theorem}{Theorem}[section]
\newtheorem{lemma}[theorem]{Lemma}
\newtheorem{proposition}[theorem]{Proposition}

\newtheorem{remark}{Remark}[section]


\makeatletter
\newcommand*{\rom}[1]{\expandafter\@slowromancap\romannumeral #1@}


\newcommand{\cond}{\;\middle\vert\;} 


\newcommand{\EE}{\mathds{E}} 

\newcommand{\eF}{\mathcal{F}}

\newcommand{\eL}{\mathcal{L}}

\newcommand{\PP}{\mathds{P}}  
\newcommand{\RR}{\mathds{R}}  

\newcommand{\Var}{\mathds{V}\kern-2pt\text{ar}} 
\newcommand{\Cov}{\mathds{C}\kern-2pt\text{ov}} 


%
%
\usepackage{xpatch}
\xpatchcmd{\qed}{\hfill}{}{}{}

\let\oldsqrt\sqrt
\def\sqrt{\mathpalette\APsqrt}
\def\APsqrt#1#2{%
\setbox0=\hbox{$#1\oldsqrt{#2}$}\dimen0=\ht0%
\advance\dimen0-0.2\ht0%
\setbox2=\hbox{\vrule height\ht0 depth -\dimen0}%
{\box0\lower0.48pt\box2}} 
\captionsetup[figure]{labelfont=bf}
\captionsetup[figure]{labelsep=period}
\captionsetup[table]{labelfont=bf}
\captionsetup[table]{labelsep=period}
\usepackage[font=footnotesize]{caption}

\allowdisplaybreaks
\begin{document}
{\singlespacing
\title{ \textbf{{\Large Dual Formulation of the Optimal Consumption Problem with Multiplicative Habit Formation}}\footnote{We are very grateful to An Chen, Tahir Choulli, Michel Vellekoop, Bas Werker, Ralf Werner and to the conference and seminar participants at the Netspar Pension Day (2021), Tilburg University (2021), Ulm University (2021), the International Pension Workshop (2022), the Winter Seminar on Mathematical Finance (2022), the $11^{\mathrm{th}}$ World Congress of the Bachelier Finance Society (2022), the $25^{\mathrm{th}}$ International Congress on Insurance: Mathematics and Economics (2022), the $5^{\mathrm{th}}$ European Actuarial Journal Conference (2022), the $14^{\mathrm{th}}$ Actuarial and Financial Mathematics Conference (2023) and the Doctoral Seminar in Sankt Gallenkirch (2023), for their helpful comments and suggestions.}}
\author{
    Thijs Kamma\footnote{Corresponding author. Mailing address: thijs.kamma@azl.eu. Akerstraat 92, 6411 HD Heerlen, the Netherlands. Phone: +31 (0)  88116 - 2000.}\\
    {\small Expert Center \& Product Data}\\
    {\small AZL}\\
    {\small NETSPAR}
    \and
    Antoon Pelsser\\
    {\small Dept. of Quantitative Economics}\\
    {\small Maastricht University}\\
    {\small  NETSPAR }
    \and
}

\date{\today}
\maketitle
\vspace{-0.75cm}
\begin{abstract}
{ \color{black}This paper provides a dual formulation of the optimal consumption problem with internal multiplicative habit formation. In this problem, the agent derives utility from the ratio of consumption to the internal habit component. Due to this multiplicative specification of the habit model, the optimal consumption problem is not strictly concave and incorporates irremovable path-dependency. As a consequence, standard Lagrangian techniques fail to supply a candidate for the corresponding dual formulation. Using Fenchel's Duality Theorem, we manage to identify a candidate formulation and prove that it satisfies strong duality. On the basis of this strong duality result, we are able to derive duality relations that stipulate how the optimal primal controls depend on the optimal dual controls and vice versa. {Moreover, using the dual formulation, we develop an analytical evaluation mechanism to bound the accuracy of approximations to the optimal solutions.}}
\end{abstract}
\noindent
{\footnotesize {\bf Keywords:} Fenchel Duality, habit formation, life-cycle investment,  stochastic optimal control, utility maximisation}\\
{\footnotesize {\bf JEL Classification:} C61, D15, D53, D81, G11}
\maketitle
\setcounter{page}{0}
\thispagestyle{empty}
}
%
\newpage
\section{Introduction}
Habit formation describes the phenomenon of an individual growing accustomed to a certain standard of living. In a financial context, this standard of living is dependent on a person's past decisions with regard to saving and consumption. Consuming more or less than a person-specific living standard may impact the utility levels of an individual, cf.~\citet{kahneman1979prospect}. It is therefore plausible that habit formation affects the current consumption behaviour of a person. To model and analyse the impact of habit-forming tendencies on this behaviour, a wide variety of studies have investigated optimal consumption problems that incorporate a habit level, representing the agent's living standard. These studies can be distinguished into two categories: (i) those that focus on additive habits, and (ii) those that concentrate on multiplicative habits. 

We start by discussing the additive habits. In optimal consumption problems with additive habits, the utility-maximising individual draws utility from the \textit{difference} between consumption and a habit level. The literature on these habits is pioneered by \citet{constantinides1990habit} and has been studied by e.g.~\citet{detemple1991asset}, \citet{campbell1999force}, \citet{munk2008portfolio}, \citet{muraviev2011additive}, and \citet{yu2015utility}. Additive habit models typically employ arithmetic habit levels, which monotonically increase over time, cf.~\citet{detemple2003non}, \citet{bodie2004optimal}, and \citet{polkovnichenko2007life}. Furthermore, as most standard utility functions only admit strictly positive arguments, additive habit specifications force the agent to maintain consumption above the habit level. For this reason, the habit component is sometimes interpreted as a subsistence level, see e.g.~\citet{yogo2008asset}. This interpretation is sensible for exogenous habits. However, if we assume that habits are endogenous, the habit level depends on the individual's past decisions and becomes person-specific. Consequently, for endogenous habits, it is hard not to consider the habit component as a standard of living that increases over time. 

Although individuals have a natural incentive to maintain consumption at least above their living standard, it is clear that additive habit models are too restrictive to be realistic. We attribute this restrictiveness to two main reasons. First of all, in practice, adverse changes in the financial circumstances can urge people to scale down consumption below the level to which they have become accustomed. Second, because of the latter phenomenon, an individual's standard of living may decrease over the course of a lifetime. To arrive at a more realistic model setup that manages to deal with the preceding two situations, the following two modifications can be made. As for the possibility of a declining standard of living, one can employ a geometric specification of the habit level, cf.~\citet{kozicki2002dynamic}, \citet{corrado2011multiplicative}, and \citet{van2020consumptiona}. Unlike the arithmetic habit levels, this geometric specification relies on the logarithmic transformation of consumption, and can therefore decrease over time. As for the possibility of scaling down consumption below the habit level, one can make use of multiplcative habit models. 

We now continue with a discussion of the multiplicative habits. Optimal consumption problems with multiplicative habit formation assume that the utility-maximising individual derives utility from the \textit{ratio} of consumption to a habit level. The specification of these habits dates back to \citet{abel1990asset}, and has been economically advocated by \citet{carroll2000solving} and \citet{carroll2000saving}. Contrary to the additive case, consumption is in this multiplicative setup not constrained to achieve values above the habit level. Namely, since the ratio of consumption to the habit level is always strictly positive, it can be included as an argument in all standard utility functions. The multiplicative habit model consequently allows the agent to reduce consumption levels below the habit component. Furthermore, the multiplicative habit model endows the utility-maximising agent with a strong incentive to fix consumption near/above the habit level. This incentive is due to the fact that the utility function of the agent increases with the magnitude of the ratio. 

When the habit level is endogenously determined (internal), standard solution techniques generally fail to solve optimal consumption problems with multiplicative habit formation in closed-form. Because of its dependence on past consumption decisions, the habit component gives rise to path-dependency in the objective function. This path-dependency is irremovable and cannot be handled in an analytical manner.\footnote{This analytical intractability is unique to problems involving multiplicative habits. In case of additive habits, the path-dependency can be eliminated from the problem, cf.~\citet{schroder2002isomorphism}.} Due to the structure of multiplicative habits, the optimal consumption problem is not strictly concave. In general, non-concave optimisation problems are more difficult to solve than concave ones, see e.g.~\citet{chen2019constrained}. To be able to analyse the corresponding optimal solutions, the general approach is to fall back on (i) numerical routines, (ii) approximations or (iii) duality techniques. In a discrete-time setup, \citet{fuhrer2000habit} and \citet{gomes2003portfolio} employ numerical methods to analyse the internal multiplicative habit model. More recently, in a continuous-time setup, \citet{van2020consumptiona}, \citet{li2021robust} and \citet{wang2024investment} have made use of both approximation and numerical routines.\footnote{We exclusively mention studies that focus on the consumption problem with internal multiplicative habits. Problems involving external habit formation, see e.g.~\citet{carroll1997comparison}, \citet{chan2002catching} and \citet{gomez2009implications}, do not pose issues when it comes to deriving optimal (duality) results. Martingale duality techniques, developed in the seminal contributions by \citet{pliska1986stochastic}, \citet{karatzas1987optimal}, and \citet{CoxHuang:1989:Optimalconsumptionand,cox1991variational}, suffice to analytically solve these consumption problems.} Although these studies provide valuable insights into the (optimal) solutions, they ignore potential benefits and insights from duality approaches. In fact, to the best of our knowledge, a dual formulation for the multiplicative habit model is not known. 

In this paper, we provide a dual formulation of the optimal consumption problem with internal multiplicative habit formation. We derive this formulation in a continuous-time setup with a finite trading-horizon. To this end, we rely on a multi-dimensional market model characterised by a general strictly positive semi-martingale process. For the agent's preferences, we make use of a utility function that admits a broad class of preference qualifications. The habit level of this utility-maximising individual is assumed to live by a geometric form. The conventional Lagrangian methods for obtaining dual formulations, e.g.~those in \citet{KleinRoger:2007:Dualityinoptimal} and \citet{rogers2003duality,rogers2013optimal}, are unable to supply a dual for this multiplicative habit problem. Namely, due to the fact that the problem is non-concave and involves path-dependency, the ordinary Legendre transform fails to establish the necessary conjugacy properties. {Therefore, we resort to an application of Fenchel's Duality Theorem. While Fenchel duality has a well-established history within the realm of mathematical finance, cf.~\citet{cui2018shortfall} and \citet{biagini2011admissible, biagini2020convex}, its implementation in problems with regard to habit formation is rare. By means of Fenchel duality} and a change of variables we are able derive a dual formulation and prove that strong duality holds. On the grounds of this strong duality outcome, we are able to derive so-called duality relations. These relations outline identities that disclose how the optimal primal controls depend on the optimal dual controls and vice versa. As a result, these identities infer something about the analytical structure of the optimal controls. 

{We illustrate the financial relevance of our duality result via a dual-control method. Similar methods have been developed for constrained investment-consumption problems, cf.~\citet{ma2017dual,ma2020dual} and \citet{weiss2020numerical}. Our method is inspired by \citet{bick2013solving} and \citet{kamma2021near}, and relates to the inability to provide closed-form solutions to the optimal consumption problem. Closed-form expressions for the optimal control(s) in frameworks with multiplicative habit formation are still unavailable. To the best of our knowledge, \citet{van2020consumptiona} is the only study that addresses this issue. By means of a log-linearisation procedure, the authors derive an analytical expression for an approximation to optimal consumption. They use a numerical grid-search routine in order to measure this accuracy. When the dimension of the problem is large, the estimated precision may be subject to a high degree of uncertainty. Our dual-control method is completely analytical and consequently tackles the latter problem. This duality-based method can be used for general approximations and generates a hard analytical upper bound on their accuracy. Due to the closed-form availability of the aforementioned upper bound, the method costs little to no computational effort.} 

The remainder of the paper is organised as follows. Section \ref{sec2} introduces the model setup and the optimal consumption problem with multiplicative habit formation. Section \ref{sec3} analyses the differences between additive and multiplicative habits on the subject of duality. Section \ref{sec44} contains a heuristic derivation of a candidate for the dual formulation using a modified version of \citet{KleinRoger:2007:Dualityinoptimal}'s identification method. Section \ref{sec4} presents and proves our main result: the dual formulation. In this section, we also elaborate on the duality relations corresponding to this result. {Finally, section \ref{sec6} concludes with a financial application to illustrate the economic relevance of our duality result. This application concerns the design of an analytical dual-control method.}

\section{Model Setup}\label{sec2}
In this section, we introduce the model setup. First, we lay out the financial market model. Second, we define the agent's wealth process. Third, we specify the agent's habit level. Fourth, we outline the optimal consumption problem.

\subsection{Financial Market Model}
Our financial market model is $N$-dimensional, defined in continuous-time, and based on the economic environments provided in \citet{detemple2010dynamic}, \citet{van2020consumptiona} and \citet{kamma2021near}. We define $T>0$ as the finite trading or planning horizon, and $\left[0,T\right]$ as the corresponding trading interval. Moreover, we introduce the complete filtered probability space $(\Omega,\eF,\left\lbrace\eF_t\right\rbrace_{t\in\left[0,T\right]},\PP)$. The components of this space live by their typical definitions, and its randomness is generated by an $\RR^{N}$-valued standard Brownian motion, $\left\lbrace W_t\right\rbrace_{t\in\left[0,T\right]}$. As of now, all (in)equalities between random variables and stochastic processes are understood in a $\PP$-a.s. or a $\mathrm{d}t\otimes \PP$-a.e. sense.

The financial market, $\mathcal{M}$, contains a scalar-valued money market account and $N$ risky assets that are represented by $N$ semi-martingale processes. The money market account submits to the following ordinary differential equation (ODE):
\begin{equation}
    \frac{\mathrm{d}B_t}{B_t}=r_t\mathrm{d}t, \ B_0=1.
\end{equation}
Here, $r_t$ represents the $\RR$-valued instantaneous interest rate. We assume that $r_t$ is $\eF_t$-progressively measurable and fulfills $\int_0^T\left|r_t\right|\mathrm{d}t<\infty$.  The price processes for the $N$ risky assets (stocks) evolve according to the following stochastic differential equation (SDE) for all $i=1,\hdots,N$:
\begin{equation}
\frac{\mathrm{d}S_{i,t}}{S_{i,t}}=\mu_{i,t}\mathrm{d}t+\sigma_{i,t}^{\top}\mathrm{d}W_t, \ S_{i,0}=1,
\end{equation}
where $\mu_{i,t}$ denotes the $\RR$-valued instantaneous expected return on stock $i$ and $\sigma_{i,t}$ the $\RR^N$-valued vector containing the volatility processes for stock $i$, both of which are $\eF_t$-progressively measurable. We postulate that $\int_0^T\left\Vert\mu_t\right\Vert_{\RR^N}\mathrm{d}t<\infty$ and $\int_0^T\mathrm{Tr}\big({\sigma}_t{\sigma}_t^{\top}\big)\mathrm{d}t<\infty$, in which $\mu_t\in\RR^N$ has entries $\mu_{i,t}$, and $\sigma_t\in\RR^{N\times N}$ rows $\sigma_{i,t}$, $i=1,\hdots,N$. Observe here that $\left\Vert\cdot\right\Vert_{\RR^N}$ denotes the $N$-dimensional Euclidean norm and that $\mathrm{Tr}\left(\cdot\right)$ represents the trace operator. To ensure invertibility of $\sigma_t$, we assume that $\sigma_t$ is non-singular.

Due to the absence of trading restrictions, this financial market is complete, i.e.~all traded risks are hedgeable. Hence, by the fundamental theorem of asset pricing, as formulated by \citet{Delbaen:Schachermayer:94}, there must exist a unique equivalent martingale measure. Correspondingly, there must exist a unique state price density (SPD), $\left\lbrace M_t\right\rbrace_{t\in\left[0,T\right]}$. Define $\lambda_t:=\sigma_t^{-1}\left(\mu_t-r_t1_N\right)$ as the market price of risk, then $M_t$ reads:
\begin{equation}\label{eq:sec2.1.spd}
\frac{\mathrm{d}M_t}{M_t}=-r_t\mathrm{d}t-\lambda_t^{\top}\mathrm{d}W_t, \ M_0=1.
\end{equation}
Note that $\left\lbrace B_t\right\rbrace_{t\in\left[0,T\right]}$ is selected as the num\'{e}raire quantity. We assume that $\left\lbrace\lambda_t\right\rbrace_{t\in\left[0,T\right]}$ satisfies $\EE\big[\exp\big(\frac{1}{2}\int_0^T\big\Vert{\lambda}_s\big\Vert_{\RR^N}^2\mathrm{d}s\big)\big]<\infty$, cf.~\citet{ks:stc}. Moreover, we postulate that $\left\lbrace\lambda_t\right\rbrace_{t\in\left[0,T\right]}$ is such that $M_t$ and $\log M_t$ attain values in $L^2\left(\Omega\times\left[0,T\right]\right)$.\footnote{We define $L^p\left(\Omega\times\left[0,T\right];\RR^n\right)$ as the standard Lebesgue space of all $\eF_t$-progressively measurable functions, $f:\Omega\times\left[0,T\right]\rightarrow\RR^n$, satisfying $\big(\int_{\Omega\times\left[0,T\right]}\left\Vert f_t\right\Vert^p_{\RR^n}\PP\left(\mathrm{d}t\right)\big)^{{1}/{p}}=\big(\EE\big[\int_0^T\left\Vert f_t\right\Vert^p_{\RR^n}\mathrm{d}t\big]\big)^{{1}/{p}}<\infty$. If $n=1$, we drop the ``$\RR$''-notation from the definition of the $L^p$ space.} The latter assumption is necessary to assure well-posedness of the dual formulation. In order to evaluate financial instruments in a risk-neutral fashion, one can make use of $M_t$. For example, $M_tB_t$ and $M_tS_t$ are both $\PP$-martingales with respect to $\left\lbrace\eF_t\right\rbrace_{t\in\left[0,T\right]}$.

\subsection{Dynamic Wealth Process}
In this environment, the agent is free to continuously select an investment and a consumption strategy over $\left[0,T\right]$. Specifically, the agent's wealth process, $\left\lbrace X_t\right\rbrace_{t\in\left[0,T\right]}$, is affected by two endogenous terms: (i) a process for the proportion of wealth that is allocated to the stock, $\left\lbrace \pi_t\right\rbrace_{t\in\left[0,T\right]}$, and (ii) a consumption process, $\left\lbrace c_t\right\rbrace_{t\in\left[0,T\right]}$. We assume that both preceding endogenous processes are $\eF_t$-progressively measurable. Let us fix a deterministic initial endowment, $X_0\in\RR_+$. Then, the agent's wealth process is defined by: 
\begin{equation}\label{eq:sec2.1bc}
\mathrm{d}X_t=X_t\left[\left(r_t+\pi_t^{\top}\sigma_t\lambda_t\right)\mathrm{d}t+\pi_t^{\top}\sigma_t\mathrm{d}W_t\right]-c_t\mathrm{d}t,
\end{equation} 

Clearly, $\left\lbrace c_t\right\rbrace_{t\in\left[0,T\right]}$ is $\RR_+$-valued and $\left\lbrace \pi_t\right\rbrace_{t\in\left[0,T\right]}$ is $\RR^N$-valued. A trading-consumption pair, $\left\lbrace c_t,\pi_t\right\rbrace_{t\in\left[0,T\right]}$, is said to be \textit{admissible} if it satisfies the following set of conditions: $X_t\geq 0$, $\int_0^T\pi_t^{\top}{\sigma}_t{\sigma}_t^{\top}\pi_t\mathrm{d}t<\infty$, $ \int_0^T\left|\pi_t^{\top}\sigma_t\lambda_t+r_tX_t\right|\mathrm{d}t<\infty$, and  $\log c_t\in L^2\left(\Omega\times\left[0,T\right]\right)$. The set containing all admissible trading-consumption pairs is denoted by $\mathcal{A}_{X_0}$. Observe that the proportion of wealth that is allocated to the cash account can be recovered from $1-\pi_t^{\top}1_N$, where $1_N$ is an $\RR^N$-valued vector containing only $1$'s. This specific proportion only plays a role through $\left\lbrace\pi_t\right\rbrace_{t\in\left[0,T\right]}$, due to which it can be excluded from the representation for $\left\lbrace X_t\right\rbrace_{t\in\left[0,T\right]}$. See e.g.~\citet{cuoco1997optimal} for a situation in which this is not the case.

\subsection{Habit Level}
The economic environment $\mathcal{M}$ consists of a utility-maximising agent who is internally habit-forming. As a consequence, the individual is in possession of a habit level, $h_t$ at time $t\in\left[0,T\right]$. This habit level represents the level of consumption to which the agent has become accustomed. Naturally, $h_t$ depends on the agent's preferences and his/her corresponding past consumption behaviour. Due to this dependence on past consumption decisions, the habit level constitutes an endogenous (internal) component. If $h_t$ is exogenously determined ($\beta=0$ below), the agent is externally habit-forming. By analogy with \citet{van2020consumptiona} and references therein, we suppose that the logarithmic transformation of this habit level, $h_t$, is given by:
\begin{equation}\label{eq:sec2.2.hablev}
\mathrm{d}\log h_t=\left(\beta\log c_t -\alpha\log h_t\right)\mathrm{d}t, \ \log h_0=0.
\end{equation}

The parameter $\beta\in\RR_+$ expresses the relative importance of past consumption decisions in the specification of $\log h_t$. For large values of $\beta$,  more weight is attached to these past consumption choices. For small values of $\beta$, the converse is true. The parameter $\alpha\in\RR_+$ stands for the habit level's rate of depreciation. For small values of $\alpha$, the habit level depends on past consumption decisions over a large time-horizon. For large values of $\alpha$, the converse is true. We assume that $\alpha\geq\beta$ holds, for concavity purposes related to the optimal consumption problem. The limiting case $\alpha=\beta=0$ results in $h_t=1$ for all $t\in\left[0,T\right]$. Setting $\alpha=\beta=0$ consequently recovers a model without habit formation. 

We note that the solution to the ODE in \eqref{eq:sec2.2.hablev} reads for all $t\in\left[0,T\right]$ as:
\begin{equation}\label{eq:sec2.2.habsol}
\log h_t = \beta\int_0^te^{-\alpha\left(t-s\right)}\log c_s\mathrm{d}s.
\end{equation} 
Hence, the habit level lives by a geometric form. That is, $h_t=\exp\big\{\beta\int_0^te^{-\alpha\left(t-s\right)}\log c_s\mathrm{d}s\big\}$ holds for all $t\in\left[0,T\right]$. In contrast with arithmetic habits, cf.~\citet{constantinides1990habit} and \citet{van2020consumptionms}, this specification of $h_t$ is not strictly increasing in time. As the geometric form consequently allows for decreases in $h_t$ over $t\in\left[0,T\right]$, the interpretation of this habit component as a standard of living is more sensible. Ultimately, we observe that $\log h_t$ in \eqref{eq:sec2.2.habsol} can be represented as follows: $\log h_t=\alpha\int_0^te^{-\alpha\left(t-s\right)}\log c_s^{{\beta}/{\alpha}}\mathrm{d}s$, for all $t\in\left[0,T\right]$. This representation indicates that $ h_t$ can be interpreted as the geometric weighted moving average (GWMA) of transformed past consumption decisions, $\big\{ c_s^{\beta/\alpha}\big\}_{s\in\left[0,t\right]}$. Clearly, if $\alpha=\beta\neq 0$, $\frac{c_t}{h_t}$ becomes a dimensionless quantity, and $h_t$ reduces to the ordinary GWMA of (non-transformed) past consumption decisions, $\left\lbrace c_s\right\rbrace_{s\in\left[0,t\right]}$. 

\subsection{Optimal Consumption Problem}
The habit-forming agent in $\mathcal{M}$ is at $t=0$ in possession of a predetermined amount of cash, $X_0\in\RR_+$, and lives until $t=T$. Throughout the trading interval, $\left[0,T\right]$, this agent seeks to maximise expected lifetime utility from the ratio of consumption to the habit process by continuously selecting his/her consumption levels and corresponding portfolio weights. The habit-forming agent must determine these controls in agreement with the dynamic budget constraint in \eqref{eq:sec2.1bc}, such that the admissibility conditions are met. We assume that the preferences of the individual are characterised by the nonseparable von Neumann-Morgenstern index: $\EE\big[\int_0^TU\left(t,{c_t}/{h_t}\right)\mathrm{d}t\big]$, cf.~\citet{detemple1991asset}. Consistent with this description, the agent faces the following problem:
\begin{equation}\label{eq:sec2.2dynprob}
\begin{aligned}
\sup_{\left\lbrace c_t,\pi_t\right\rbrace_{t\in\left[0,T\right]}\in{\mathcal{A}}_{X_0}} \ &\EE\left[\int_0^TU\left(t,\frac{c_t}{h_t}\right)\mathrm{d}t\right]\\ \mathrm{s.t.} \ \ & \mathrm{d}X_t=X_t\left[\left(r_t+\pi_t^{\top}\sigma_t\lambda_t\right)\mathrm{d}t+\pi_t^{\top}\sigma_t\mathrm{d}W_t\right]-c_t\mathrm{d}t, \\ & \mathrm{d}\log h_t=\left(\beta\log c_t -\alpha\log h_t\right)\mathrm{d}t, \ \ h_0=1, \ X_0\in\RR_+. 
\end{aligned}
\end{equation}

In this problem, $U:\left[0,T\right]\times\RR_+\rightarrow\RR$ denotes the agent's utility function. Henceforth, we set $U'\left(t,x\right)=\frac{\partial}{\partial x}U\left(t,x\right)$ and $U''\left(t,x\right)=\frac{\partial^2}{\partial x\partial x}U\left(t,x\right)$, for all $t\in\left[0,T\right]$ and $x\in\RR_+$. In line with the von Neumann-Morgenstern paradigm, we postulate that $U'\left(t,x\right)>0$ and $U''\left(t,x\right)<0$ hold, for all $t\in\left[0,T\right]$ and $x\in\RR_+$. Additionally, for purposes related to concavity of the optimisation problem, we assume that $-x\frac{U''\left(t,x\right)}{U'\left(t,x\right)}>1$, for all $t\in\left[0,T\right]$ and $x\in\RR_+$. This assumption is slightly stronger than the asymptotic elasticity requirement introduced by \citet{kramkov1999asymptotic}. Given the latter assumptions, we are able to define the unique function $I:\left[0,T\right]\times\RR_+\rightarrow\RR_+$, such that $U'\left(t,I\left(t,x\right)\right)I\left(t,x\right)=x$ holds, for all $t\in\left[0,T\right]$ and $x\in\RR_+$. Last, to ensure that $\EE\big[\int_0^TU\left(t,{c_t}/{h_t}\right)\mathrm{d}t\big]<\infty$ holds for all $\log c_t\in L^2\left(\Omega\times\left[0,T\right]\right)$, we postulate that there exists a constant, $M\in\RR_+$, such that $\int_0^TU\left(t,M\right)\mathrm{d}t<\infty$ and $\int_0^TU'\left(t,M\right)^2\mathrm{d}t<\infty$ hold true. 

\section{Habit Formation and Duality}\label{sec3}
In this section, we analyse two optimal consumption problems involving different forms of habit formation. One problem coincides with the \textit{multiplicative} formulation in \eqref{eq:sec2.2dynprob}. The other problem concerns the \textit{additive} counterpart of this formulation. Apart from the specification of the habit component and the argument of the utility function, these problems are identical to each other. We primarily concentrate on the derivation of the duals corresponding to these distinct problems. Thereby, we aim to highlight the non-trivial nature of the derivation required to obtain the dual formulation for multiplicative habit models. Subsequently, we first introduce the optimal consumption problem with additive habit formation. Second, we examine the most conventional approach to deriving the dual for this additive setup. Third, we demonstrate that this procedure fails to supply a well-defined dual formulation for the multiplicative problem presented in \eqref{eq:sec2.2dynprob}. 

\subsection{Consumption Problem with Additive Habits}\label{sec3.1}
We observe that the optimal consumption problem in \eqref{eq:sec2.2dynprob} incorporates a multiplicative habit component. To arrive at a formulation that corresponds to an additive habit model, we are required to moderately re-specify this problem. For this purpose, we adjust (i) the second argument of the utility function, and (ii) the definition of the habit component. We comment on the reasons for these primary adjustments in the sequel. On a more secondary level, for technical purposes, we replace the $\log c_t\in L^2\left(\Omega\times\left[0,T\right]\right)$ assumption in $\mathcal{A}_{X_0}$ by the following postulate: $c_t\in L^2\left(\Omega\times\left[0,T\right]\right)$. To avoid confusion, we denote the ensuing admissibility set by $\widehat{\mathcal{A}}_{X_0}$. Apart from these modifications, the notation from section \ref{sec2} as well as all the corresponding definitions and assumptions remain unchanged. The utility-maximising individual then faces the following optimal control problem:
\begin{equation}\label{eq:sec3.1.dynprob.additive}
\begin{aligned}
\sup_{\left\lbrace c_t,\pi_t\right\rbrace_{t\in\left[0,T\right]}\in{\widehat{\mathcal{A}}_{X_0}}} \ &\EE\left[\int_0^TU\left(t,c_t-h_t\right)\mathrm{d}t\right]\\ \mathrm{s.t.} \ \ & \mathrm{d}X_t=X_t\left[\left(r_t+\pi_t^{\top}\sigma_t\lambda_t\right)\mathrm{d}t+\pi_t^{\top}\sigma_t\mathrm{d}W_t\right]-c_t\mathrm{d}t, \\ & \mathrm{d} h_t=\left(\beta c_t -\alpha h_t\right)\mathrm{d}t, \ \ h_0=1, \ X_0\in\RR_+. 
\end{aligned}
\end{equation}

We first address the re-specified expected utility criterion. Compared to the agent in \eqref{eq:sec2.2dynprob}, the individual in \eqref{eq:sec3.1.dynprob.additive} derives utility from $c_t-h_t$ instead of $\frac{c_t}{h_t}$. Clearly, this modification is necessary to arrive at an additive habit model. In consideration of the domain of $\left(t,x\right)\mapsto U\left(t,x\right)$, the following constraint must be enforced upon the agent's consumption behaviour: $c_t>h_t$, for all $t\in\left[0,T\right]$. Put differently, the utility-maximising agent in \eqref{eq:sec3.1.dynprob.additive} is obliged to consume more than the habit level at all times. In order to make economic sense of this constraint, the literature interprets $h_t$ as a subsistence level, cf.~\citet{detemple1991asset}. This interpretation is sensible for large agents, e.g.~nations or large-scaled populations. However, adapted to smaller agents, it is considerably more intuitive to interpret $h_t$ as an individual-specific standard of living. In that case, the $c_t>h_t$ constraint is too restrictive to be realistic. We refer to the introduction for more arguments addressing the economic limitations of additive habit formation. Observe that \eqref{eq:sec2.2dynprob} is able to relax the former constraint.

Now, let us consider the re-definition of the habit component. Instead of a geometric habit level, the problem in \eqref{eq:sec3.1.dynprob.additive} includes an arithmetic habit component: 
\begin{equation}\label{eq:sec3.1.hab.comp}
    h_t=e^{-\alpha t}h_0 + \beta\int_0^te^{-\alpha\left(t-s\right)}c_s\mathrm{d}s,
\end{equation}
for all $t\in\left[0,T\right]$ and some $h_0\in\RR_+$. This modified specification of $h_t$ in the additive model is necessary for one important technical reason. Due to the discrepancy between $c_t$ and $\log c_t$, a geometric specification of $h_t$ results in problems with respect to the concavity of \eqref{eq:sec3.1.dynprob.additive}. Unlike in \eqref{eq:sec2.2dynprob}, this non-concavity cannot be eliminated from the problem by a change of variables. As a consequence, it is not possible to find a corresponding dual formulation. By means of the arithmetic habit component in \eqref{eq:sec3.1.dynprob.additive}, the problem is completely concave and therefore amenable to duality applications. For the same reason, the multiplicative habit problem incorporates a geometric habit component rather than an arithmetic one. Fenchel's Duality Theorem sheds further light on the interplay between $c_t$ and $h_t$ concerning the possibility to identify a dual formulation. Concretely, as long as one is able to specify the value function as some transformation of a bounded linear map of $c_t$ or $\log c_t$, a dual problem can be found, see Remark \ref{rem1}. Ultimately, we stress that the formulation in \eqref{eq:sec3.1.dynprob.additive} is standard, cf.~\citet{munk2008portfolio}. On the basis of our reasoning above, we can conclude that \eqref{eq:sec3.1.dynprob.additive} identifies the closest additive-based equivalent of \eqref{eq:sec2.2dynprob}. 

\begin{remark}\label{rem1}
In Proposition \ref{fenchelduality} of section \ref{appendix:a}, we supply the statement corresponding to Fenchel Duality. Under some regularity conditions, the proposition demonstrates that the problems $p^{*}$ and $d^{*}$ satisfy strong duality. For the interplay between $c_t$ and $h_t$, we must focus on the argument of $f^{*}$ in problem $d^{*}$ of \eqref{eq:fenchel.eq1}. Note that $f^{*}$ can be regarded as the objective function of \eqref{eq:sec2.2dynprob} or \eqref{eq:sec3.1.dynprob.additive}, In this argument, $A^{*}$ represents a bounded linear operator and $y^{*}$ the primal control (in our case $c_t$ or $\log c_t$). Therefore, the generality of Fenchel's strong duality result solely applies to objective functions involving a bounded linear operator of the primal control(s). Adapted to our situation, for example~$A^{*}y^{*}=c_t+\beta \int_0^te^{-\alpha \left(t-s\right)} c_s\mathrm{d}s$ is covered by Fenchel, whereas $A^{*}y^{*}=c_t+\beta \int_0^te^{-\alpha \left(t-s\right)} \log c_s\mathrm{d}s$ is not. Observe that standard (Lagrangian) duality results can be subsumed under the umbrella of Fenchel Duality. This phenomenon corroborates the need to specify $h_t$ for problems \eqref{eq:sec2.2dynprob} and \eqref{eq:sec3.1.dynprob.additive} in different ways. Under alternative habit specifications, we would not be able to derive corresponding dual formulations. For more details on Fenchel Duality, cf.~section \ref{appendix:a}.
\end{remark}

\subsection{Derivation of the Additive Dual}\label{sec3.2}
We continue with a derivation of the dual for optimal consumption problems involving additive habit formation. This derivation is based on well-documented conjugacy results and can be applied to a broad class of optimal control problems. For studies that strongly rely on these conjugacy properties in similar setups, see e.g.~\citet{cvitanic1992convex}, \citet{yu2015utility}, and \citet{czichowsky2016duality}. In this regard, it is noteworthy that the dual value function renders by definition an upper bound on the primal value function. The inequality inherent in this upper bound is generally predicated on the aforementioned conjugacy features. Indeed, in most utility-maximisation setups, the convex conjugate of the preference function manages to supply an inequality leading to the dual. On the basis of this conjugate function, we are also able to identify and derive a strong duality result for the additive problem in \eqref{eq:sec3.1.dynprob.additive}. For this reason, we introduce the convex conjugate corresponding to the utility function, $\left(t,x\right)\mapsto U\left(t,x\right)$, for all $t\in\left[0,T\right]$ and $x\in\RR_+$:
\begin{equation}\label{eq:sec3.2.conv.conj}
    V\left(t,x\right)=\sup_{z\in\RR_+}\left\lbrace U\left(t,z\right)-xz\right\rbrace,
\end{equation}

As a result of the convex conjugate, we are able to derive the following inequality: $U\left(t,x\right)\leq V\left(t,x\right)+xz$, for all $x,z\in\RR_+$ and $t\in\left[0,T\right]$. This inequality enables us to spell out an upper bound on the objective function of \eqref{eq:sec3.1.dynprob.additive}:
\begin{equation}\label{eq:sec3.2.ineq}
    \EE\left[\int_0^TU\left(t,c_t-h_t\right)\mathrm{d}t\right]\leq\EE\left[\int_0^TV\left(t,Z_t\right)\mathrm{d}t\right]+\EE\left[\int_0^TZ_t\left(c_t-h_t\right)\mathrm{d}t\right],
\end{equation}
for some $\RR_+$-valued process $Z:=\left\lbrace Z_t\right\rbrace_{t\in\left[0,T\right]}$. Although the right-hand side (RHS) of \eqref{eq:sec3.2.ineq} furnishes an upper bound on the primal value function, it is not necessarily independent of $\left\lbrace c_t\right\rbrace_{t\in\left[0,T\right]}$. By definition, a dual problem does not involve any primal control variables. Therefore, in its current form, the upper bound in \eqref{eq:sec3.2.ineq} does not qualify as a valid dual objective. Nevertheless, it is possible to identify $Z$ in such a manner that the preceding upper bound outlines a proper dual value function. To this end, let us inspect the definition of $Z$ more closely. As long as $Z$ does not depend on any primal controls, the first term on the RHS of \eqref{eq:sec3.2.ineq} is likewise independent of primal controls. This argument, however, does not carry over to the second term on the RHS of \eqref{eq:sec3.2.ineq}. Clearly, $\EE\left[\int_0^TZ_t\left(c_t-h_t\right)\mathrm{d}t\right]$ cannot be reduced to a primal-independent expression for all processes $Z$ that are independent of $\left\lbrace c_t\right\rbrace_{t\in\left[0,T\right]}$ and/or $\left\lbrace\pi_t\right\rbrace_{t\in\left[0,T\right]}$. To formulate the smallest possible upper bound on this expectation that is independent of primal processes, we employ the following results.  

Suppose that $\left\lbrace Y_t\right\rbrace_{t\in\left[0,T\right]}$ evolves according to:
\begin{equation}
    \frac{\mathrm{d}Y_t}{Y_t}=\gamma_t\mathrm{d}t+\delta_t^{\top}\mathrm{d}W_t,
\end{equation}
for some $Y_0\in\RR_+$, an $\RR$-valued process $\left\lbrace\gamma_t\right\rbrace_{t\in\left[0,T\right]}$, and an $\RR^N$-valued process $\left\lbrace\delta_t\right\rbrace_{t\in\left[0,T\right]}$, both of which are $\eF_t$-progressively measurable. We assume that $\left\lbrace\gamma_t\right\rbrace_{t\in\left[0,T\right]}$ and $\left\lbrace\delta_t\right\rbrace_{t\in\left[0,T\right]}$ are such that $\left\lbrace Y_t\right\rbrace_{t\in\left[0,T\right]}$ defines a semi-martingale process. Then, we can derive that:
\begin{equation}\label{eq:sec3.2.def.Y}
\int_0^TY_tc_t=X_0Y_0+\int_0^T\left(r_t+\pi_t^{\top}\sigma_t\left[\lambda_t+\delta_t\right]+\gamma_t\right)X_tY_t\mathrm{d}t+\int_0^T\left(\pi_t^{\top}\sigma_t+\delta_t\right)X_tY_t\mathrm{d}W_t.
\end{equation}
Here, we make use of the fact that $X_T=0$ should hold in problem \eqref{eq:sec3.1.dynprob.additive}. If $X_T>0$, more expected utility can be derived by consuming $c_t=X_t$ at $t=T$. Therefore, $X_T=0$ is a condition hidden under the surface of the optimisation problem in \eqref{eq:sec3.1.dynprob.additive}. Let us return to \eqref{eq:sec3.2.def.Y}, and note that $Y_0$, $\left\lbrace\gamma_t\right\rbrace_{t\in\left[0,T\right]}$ and $\left\lbrace\delta_t\right\rbrace_{t\in\left[0,T\right]}$ are to be determined. We aim to derive the smallest possible primal-independent upper bound on $\EE\left[\int_0^TY_tc_t\mathrm{d}t\right]$ by means of an appropriate specification of $\left\lbrace\gamma_t\right\rbrace_{t\in\left[0,T\right]}$ and $\left\lbrace\delta_t\right\rbrace_{t\in\left[0,T\right]}$. To this end, let us examine $\int_0^T\left(r_t+\pi_t^{\top}\sigma_t\left[\lambda_t+\delta_t\right]+\gamma_t\right)X_tY_t\mathrm{d}t$. Since $\left\lbrace \pi_t\right\rbrace_{t\in\left[0,T\right]}$ is not constrained, and $\left\lbrace r_t\right\rbrace_{t\in\left[0,T\right]}$ attains values in $\RR$, the smallest upper bound on the expectation of this integral is trivial for $\gamma_t\neq -r_t$ and $\delta_t\neq -\lambda_t$. Hence, in view of the goal of primal-independence, $\gamma_t=-r_t$ and $\delta_t=-\lambda_t$ must hold here. Note that $Y_t=Y_0M_t$, for $\gamma_t=-r_t$ and $\delta_t=-\lambda_t$, and all $t\in\left[0,T\right]$. Due to the latter specification of $\gamma_t$ and $\delta_t$, the drift term in \eqref{eq:sec3.2.def.Y} disappears and $\int_0^TY_tc_t=X_0Y_0+\int_0^T\left(\pi_t^{\top}\sigma_t+\delta_t^{\top}\right)X_tY_t\mathrm{d}W_t$ holds. That is, given these definitions, $\big\{\int_0^tY_sc_s\mathrm{d}s\big\}_{t\in\left[0,T\right]}$ qualifies a local $\PP$-martingale process. As this process is strictly positive, by Fatou's Lemma, we know that $\big\{\int_0^tY_sc_s\mathrm{d}s\big\}_{t\in\left[0,T\right]}$ outlines a supermartingale. Consequently, the smallest possible non-trivial upper bound on $\EE\left[\int_0^TY_tc_t\mathrm{d}t\right]$ is derived by fixing $Y_t=Y_0M_t$, for all $t\in\left[0,T\right]$:
\begin{equation}\label{eq:sec3.2.ineq2}
    \EE\left[\int_0^TY_tc_t\mathrm{d}t\right]=Y_0\EE\left[\int_0^TM_tc_t\mathrm{d}t\right]\leq X_0Y_0, \ \forall \ Y_0\in\RR_+.
\end{equation}

With this result at hand, let us consider $\EE\left[\int_0^TZ_t\left(c_t-h_t\right)\mathrm{d}t\right]$. Using a change in the order of integration, we are able to rewrite this term as follows:
\begin{equation}\label{eq:sec3.2.rewr.bc}
    \begin{aligned}
    \EE\left[\int_0^TZ_t\left(c_t-h_t\right)\mathrm{d}t\right]&=-h_0\EE\left[\int_0^TZ_te^{-\alpha t}\mathrm{d}t\right]\\&+\EE\left[\int_0^Tc_tZ_t\left\lbrace 1-\beta\EE\left[\int_t^T\frac{Z_s}{Z_t}e^{-\alpha\left(s-t\right)}\mathrm{d}s\cond\eF_t\right]\right\rbrace\mathrm{d}t\right].
    \end{aligned}
\end{equation}
We can focus on the second term on the RHS of \eqref{eq:sec3.2.rewr.bc}, as the first one is independent of primal controls for any primal-independent $Z$. Recall that primal-independence of $Z$ is necessary in order to be able to derive a well-defined problem objective in \eqref{eq:sec3.2.ineq}. In conformity with our analysis around the inequality in \eqref{eq:sec3.2.ineq2}, we know that the smallest non-trivial upper bound on the the former term can be derived by setting:
\begin{equation}\label{eq:sec3.2.volt.eq}
    Z_t-\beta\EE\left[\int_t^T{Z_s}e^{-\alpha\left(s-t\right)}\mathrm{d}s\cond\eF_t\right]=\eta M_t,
\end{equation}
for some $\eta\in\RR_+$ and all $t\in\left[0,T\right]$. The identity in \eqref{eq:sec3.2.volt.eq} outlines a Volterra equation for $Z$. The solution for $Z$ is accordingly for all $t\in\left[0,T\right]$ equal to:
\begin{equation}\label{eq:sec3.2.volt.eq.sol}
    Z_t=\eta \widehat{M}_t=\eta M_t\left\lbrace1+\beta\EE\left[\int_t^Te^{-\left(\alpha-\beta\right)\left[s-t\right]}\frac{M_s}{M_t}\mathrm{d}s\cond\eF_t\right]\right\rbrace.
\end{equation}

Using that these ingredients, from the inequality in \eqref{eq:sec3.2.ineq}, we are able to derive the following smallest non-trivial upper bound on $\EE\big[\int_0^TU\left(t,c_t-h_t\right)\mathrm{d}t\big]$:
\begin{equation}\label{eq:sec3.2.dual.ineq3}
\begin{aligned}
    \EE\left[\int_0^TU\left(t,c_t-h_t\right)\mathrm{d}t\right]&\leq \EE\left[\int_0^TV\left(t,\eta\widehat{M}_t\right)\mathrm{d}t\right]+\eta\EE\left[\int_0^T\widehat{M}_t\left(c_t-h_t\right)\mathrm{d}t\right]\\&\leq \EE\left[\int_0^TV\left(t,\eta\widehat{M}_t\right)\mathrm{d}t\right]+\eta\left(X_0-h_0\EE\left[\int_0^Te^{-\alpha t}\widehat{M}_t\mathrm{d}t\right]\right).
\end{aligned}
\end{equation}
Observe that $h_0\EE\big[\int_0^T\widehat{M}_te^{-\alpha t}\mathrm{d}t\big]=h_0\EE\big[\int_0^Te^{-\left(\alpha-\beta\right)t}M_t\mathrm{d}t\big]$ can be shown to hold true. The RHS of the second inequality in \eqref{eq:sec3.2.dual.ineq3} is clearly independent of any primal control variable, for all $\eta\in\RR_+$. For this reason, this expression qualifies as a valid objective for the dual formulation corresponding to \eqref{eq:sec3.1.dynprob.additive}. This brings us to the final step of the identification/dierivation procedure. Given a \textit{fixed} $\eta\in\RR_+$, the upper bound presented in \eqref{eq:sec3.2.dual.ineq3} is clearly the smallest possible one amongst the non-trivial options. However, the magnitude of the distance between the upper bound and the primal value function varies with respect to $\eta\in\RR_+$. Therefore, to find the smallest possible upper bound on $\EE\big[\int_0^TU\left(t,c_t-h_t\right)\mathrm{d}t\big]$ based on the RHS of \eqref{eq:sec3.2.dual.ineq3}, we must minimise $\EE\big[\int_0^TV\big(t,\eta\widehat{M}_t\big)\mathrm{d}t\big]+\eta\big(X_0-h_0\EE\big[\int_0^Te^{-\alpha t}\widehat{M}_t\mathrm{d}t\big]\big)$ over $\eta\in\RR_+$. Minimising this expression is precisely what the dual does. Hence, in accordance with the set of inequalities presented in \eqref{eq:sec3.2.dual.ineq3}, the dual formulation corresponding to \eqref{eq:sec3.1.dynprob.additive} ought to be as follows:
\begin{equation}\label{eq:sec3.2.dual.form}
    \inf_{\eta\in\RR_+}\left(\EE\left[\int_0^TV\left(t,\eta\widehat{M}_t\right)\mathrm{d}t\right]+\eta\left(X_0-h_0\EE\left[\int_0^Te^{-\alpha t}\widehat{M}_t\mathrm{d}t\right]\right)\right).
\end{equation}

We finalise this derivation by demonstrating that problems \eqref{eq:sec3.1.dynprob.additive} and \eqref{eq:sec3.2.dual.form} satisfy strong duality. For this purpose, we introduce a new function $\widehat{I}:\left[0,T\right]\times\RR_+\rightarrow\RR_+$, characterising the inverse of marginal utility: $U'\big(t,\widehat{I}\left(t,x\right)\big)=x$, for all $t\in\left[0,T\right]$ and $x\in\RR_+$. On the grounds of this function, it is easy to show that the following holds: $V\left(t,x\right)=U\big(t,\widehat{I}\left(t,x\right)\big)-x\widehat{I}\left(t,x\right)$, for all $t\in\left[0,T\right]$ and $x\in\RR_+$. Now, we note that minimisation of the objective in \eqref{eq:sec3.2.dual.form} results in the following first-order condition:
\begin{equation}\label{eq:sec3.2.foc}
\EE\left[\int_0^T\widehat{I}\left(t,\eta \widehat{M}_t\right)\widehat{M}_t\mathrm{d}t\right]=X_0-h_0\EE\left[\int_0^Te^{-\left(\alpha-\beta\right)t}M_t\mathrm{d}t\right].
\end{equation}
Due to monotonicity of $\left(t,x\right)\mapsto \widehat{I}\left(t,x\right)$ in its second argument, we know that there exists a unique $\eta\in\RR_+$ that satisfies the identity in \eqref{eq:sec3.2.foc}. Under the first-order condition in \eqref{eq:sec3.2.foc}, the optimised dual objective reads: $\EE\big[\int_0^TU\big(t,\widehat{I}\big(t,\eta^{\mathrm{opt}} \widehat{M}_t\big)\big)\mathrm{d}t\big]$, where $\eta^{\mathrm{opt}}\in\RR_+$ is such that \eqref{eq:sec3.2.foc} holds. The latter implies that strong duality holds if the following solution for $\left\lbrace c_t\right\rbrace_{t\in\left[0,T\right]}$ is admissible, for all $t\in\left[0,T\right]$:
\begin{equation}\label{eq:sec3.2.cand.sol}
    c_t-h_t=\widehat{I}\left(t,\eta^{\mathrm{opt}}\widehat{M}_t\right).
\end{equation}
Henceforth, we assume that $M_t$ is such that $\widehat{I}\big(t,\eta^{\mathrm{opt}}\widehat{M}_t\big)\in L^2\left(\Omega\times\left[0,T\right]\right)$. To prove that $\left\lbrace c_t,\pi_t\right\rbrace_{t\in\left[0,T\right]}\in\widehat{\mathcal{A}}_{X_0}$, let us define $\widehat{c}_t:=c_t-h_t$, for all $t\in\left[0,T\right]$. Then, it can be shown that the following holds: $h_t=e^{\left(\beta-\alpha\right)t}h_0+\beta\int_0^te^{\left(\beta-\alpha\right)\left[t-s\right]}\widehat{c}_s\mathrm{d}s$, for all $t\in\left[0,T\right]$. Hence, $\left\lbrace c_t\right\rbrace_{t\in\left[0,T\right]}$ can be uniquely retrieved from \eqref{eq:sec3.2.cand.sol} according to: $c_t=\widehat{c}_t+e^{\left(\beta-\alpha\right)t}h_0+\beta\int_0^te^{\left(\beta-\alpha\right)\left[t-s\right]}\widehat{c}_s\mathrm{d}s$, where $\widehat{c}_t=\widehat{I}\big(t,\eta^{\mathrm{opt}}\widehat{M}_t\big)$, for all $t\in\left[0,T\right]$. The condition imposed upon $\widehat{c}_t$ in \eqref{eq:sec3.2.cand.sol} by $\eta^{\mathrm{opt}}$ is given in \eqref{eq:sec3.2.foc}. Combining \eqref{eq:sec3.2.foc} and \eqref{eq:sec3.2.cand.sol}, the following should hold: $\EE\big[\int_0^T\widehat{c}_t\widehat{M}_t\big]=X_0-h_0\EE\big[\int_0^Te^{-\left(\alpha-\beta\right)t}M_t\mathrm{d}t\big]$. The latter identity reduces to $\EE\big[\int_0^TM_tc_t\mathrm{d}t\big]=X_0$, using the definition of $h_t$ involving $\big\{\widehat{c}_s\big\}_{s\in\left[0,t\right]}$ and the fact that $\widehat{c}_t=c_t-h_t$. On the grounds of the preceding equation, we know that $\big\{\int_0^tc_sM_s\mathrm{d}s\big\}_{t\in\left[0,T\right]}$ outlines a $\PP$-martingale process with respect to $\left\lbrace\eF_t\right\rbrace_{t\in\left[0,T\right]}$. Consequently, resorting to hedging arguments, there exists a trading strategy $\left\lbrace\pi_t\right\rbrace_{t\in\left[0,T\right]}$ corresponding to $\left\lbrace c_t\right\rbrace_{t\in\left[0,T\right]}$ implied by \eqref{eq:sec3.2.cand.sol}. By Lemma 2.2 in \citet{CoxHuang:1989:Optimalconsumptionand} and Proposition 7.3 in \citet{cvitanic1992convex}, we therefore have that $\left\lbrace c_t,\pi_t\right\rbrace_{t\in\left[0,T\right]}\in\widehat{\mathcal{A}}_{X_0}$. Since the consumption process inherent in \eqref{eq:sec3.2.cand.sol} is admissible, the problems in \eqref{eq:sec3.1.dynprob.additive} and \eqref{eq:sec3.2.dual.form} satisfy strong duality. Note that the formulation in \eqref{eq:sec3.2.dual.form} coincides with the dual for the unconstrained case provided in \citet{yu2015utility}.\footnote{For ease of exposition, we have restricted ourselves to the most standard specification of the optimal consumption problem with additive habit formation. The relative simplicity of this formulation manages to highlight the crucial steps associated with typical duality derivations. While the inclusion of endowment streams, convex trading constraints, and/or terminal wealth criteria may enlarge the exposition's generality, the core steps remain more or less the same. For a comprehensive duality-oriented theoretical treatment of a significantly more general problem, cf.~\citet{yu2015utility}. The author examines the same formulation in an incomplete market with a pre-defined (partially) non-traded labour income process. Note that the choice for the structure of problem \eqref{eq:sec3.1.dynprob.additive} is also motivated by the technical specification of \eqref{eq:sec2.2dynprob}.} 

\subsection{Issues with Multiplicative Habits}\label{sec3.3}
In this section, we analyse the issues that arise when attempting to derive the dual corresponding to \eqref{eq:sec2.2dynprob} by means of conventional machinery. For this purpose, we rely on the approach from the previous section. As this approach manages to generate duals for a great majority of convex optimisation problems, we are able to emphasise the non-trivial nature of our main result, i.e.~the dual formulation in Theorem \ref{thm4.1}. Due to the similarities between problems \eqref{eq:sec2.2dynprob} and \eqref{eq:sec3.1.dynprob.additive}, our review of the derivation technique in section \ref{sec3.2} proves useful. From the analyses around \eqref{eq:sec3.2.conv.conj}, we recall that the dual formulation is required to spawn an upper bound on the primal value function. In an attempt to derive this upper bound, we employ the inequality inherent in the convex conjugate function $\left(t,x\right)\mapsto V\left(t,x\right)$. This step of the conventional derivation corresponds to the one presented in \eqref{eq:sec3.2.ineq}. As before, we introduce an $\RR_+$-valued semi-martingale process $Z:=\left\lbrace Z_t\right\rbrace_{t\in\left[0,T\right]}$, to be determined. Note that this process does not coincide with $Z$ from section \ref{sec3.2}. With these ingredients at hand, we derive the following upper bound: 
\begin{equation}\label{eq:sec3.3.ineq1}
    \EE\left[\int_0^TU\left(t,\frac{c_t}{h_t}\right)\mathrm{d}t\right]\leq \EE\left[\int_0^TV\left(t,Z_t\right)\mathrm{d}t\right]+\EE\left[\int_0^TZ_t\frac{c_t}{h_t}\mathrm{d}t\right].
\end{equation}

In addition to supplying an upper bound, the dual objective should also be independent of all primal controls. Following the situation for additive habits, the first term on the RHS of \eqref{eq:sec3.3.ineq1} is primal-independent as long as $Z$ does not depend on $\left\lbrace c_t\right\rbrace_{t\in\left[0,T\right]}$ and $\left\lbrace \pi_t\right\rbrace_{t\in\left[0,T\right]}$. We consequently assume that $Z$ is primal-independent as well. As for the second term on the RHS of \eqref{eq:sec3.3.ineq1}, the latter assumption does not ensure that $\EE\big[\int_0^TZ_t\frac{c_t}{h_t}\mathrm{d}t\big]$ is independent of the primal controls. In fact, we are able to demonstrate that there does not exist a primal-independent process $Z$ such that $\EE\big[\int_0^TZ_t\frac{c_t}{h_t}\mathrm{d}t\big]$ or some non-trivial bound on this expression is likewise primal-independent. Clearly, the non-existence of such a process directly encumbers a derivation of the dual formulation corresponding to \eqref{eq:sec2.2dynprob} using the conventional approach. To show that there does not exist a such a process, let us inspect the integral in the second term on the RHS of \eqref{eq:sec3.3.ineq1} more closely. Suppose that $\frac{\mathrm{d}Z_t}{Z_t}=\kappa_t\mathrm{d}t+\xi_t^{\top}\mathrm{d}W_t$, for some $Y_0\in\RR_+$, an $\RR$-valued process $\left\lbrace\gamma_t\right\rbrace_{t\in\left[0,T\right]}$, and an $\RR^N$-valued process $\left\lbrace\delta_t\right\rbrace_{t\in\left[0,T\right]}$, both of which are $\eF_t$-progressively measurable. Given that $X_T=0$ is a latent constraint in \eqref{eq:sec2.2dynprob}, we are then able to derive that the following holds: 
\begin{equation}\label{eq:sec3.3.integr.expr}
    \begin{aligned}
        \int_0^TZ_t\frac{c_t}{h_t}\mathrm{d}t&=X_0Z_0+\int_0^T\frac{X_tZ_t}{h_t}\left(\pi_t^{\top}\sigma_t+\xi_t^{\top}\right)\mathrm{d}W_t\\ +&\int_0^T\frac{X_tZ_t}{h_t}\left(r_t+\pi_t^{\top}\sigma_t\left[\lambda_t+\xi_t\right]+\kappa_t-\beta\log c_t+\alpha\log h_t\right)\mathrm{d}t.
    \end{aligned}
\end{equation}
In the spirit of the inequality in \eqref{eq:sec3.3.ineq1}, the objective is now to derive a meaningful upper bound on the expectation of the preceding expression. For this reason, let us focus on the third term on the last line of \eqref{eq:sec3.3.integr.expr}. As in additive case, for general $\kappa_t$ and $\xi_t$, it is not possible to present a non-redundant upper bound on the expectation of \eqref{eq:sec3.3.integr.expr}. Due to the fact that $\left\lbrace r_t\right\rbrace_{t\in\left[0,T\right]}$, $\left\lbrace \log c_t\right\rbrace_{t\in\left[0,T\right]}$ and $\left\lbrace \log h_t\right\rbrace_{t\in\left[0,T\right]}$ achieve values in $\RR$, and that $\pi_t$ is unconstrained, the smallest upper bound is trivial. However, unlike in the derivation for additive habits, we are not able to define the processes $\left\lbrace\kappa_t\right\rbrace_{t\in\left[0,T\right]}$ and $\left\lbrace\xi_t\right\rbrace_{t\in\left[0,T\right]}$ such that this integral can be properly bounded. That is, there do not exist primal-independent processes $\left\lbrace\kappa_t\right\rbrace_{t\in\left[0,T\right]}$ and $\left\lbrace\xi_t\right\rbrace_{t\in\left[0,T\right]}$, which ensure that $r_t+\pi_t^{\top}\sigma_t\left[\lambda_t+\xi_t\right]+\kappa_t-\beta\log c_t+\alpha\log h_t=0$ holds, for all $t\in\left[0,T\right]$. Note that nullity of this integrand is the mere way to guarantee that the expectation of \eqref{eq:sec3.3.integr.expr} is bounded in a non-trivial sense. The inability to satisfy this nullity condition is entirely attributable to the inclusion of the $\log c_t$ and $\log h_t$ terms in the integrand expression. These terms follow as a result of the multiplicative habit structure. For the preceding reason(s), it is not possible to specify a well-defined dual formulation for \eqref{eq:sec2.2dynprob} by means of the conventional derivation procedure. 

\section{Heuristic Derivation}\label{sec44}
In this section, we present a heuristic derivation of the dual corresponding to \eqref{eq:sec2.2dynprob} using an adjusted version of \citet{KleinRoger:2007:Dualityinoptimal}'s identification method. We elaborate on this identification technique for two reasons. First, the standard variant of their identification method does not give rise to a well-posed dual value function. In line with our alternative derivation of the dual, we have to slightly modify their procedure. Second, the successful heuristic identification stresses that our unconventional duality result can be traced back to fairly conventional techniques. Subsequently, we start by showing that a standard application of the identification method furnishes an ill-posed dual problem. Thereafter, we demonstrate that a mildly modified equivalent of this method manages to provide a proper candidate for the dual formulation. For later reference, we note that the non-adjusted identification procedure distinguishes the following five steps:\\
  \begin{center}
    \begin{minipage}{0.85\textwidth}
      \begin{enumerate}[label=(\roman*)]
    \item Introduce Lagrangian (exponential) semi-martingale processes.
    \item Apply integration-by-parts to the Lagrangian processes and constraints.
    \item Construct the Lagrangian functional.
    \item Formulate complementary slackness (CS) conditions.
    \item Minimise the ensuing expression over the Lagrangian processes.\\
\end{enumerate}
    \end{minipage}
  \end{center}

\subsection{Standard Identification Method}\label{sec4.1}
We continue with an application of the standard identification method to the problem in \eqref{eq:sec2.2dynprob}. Note that this method can be considered as a Lagrangian framework adapted to optimal control problems in continuous-time. For this standard procedure, we consecutively follow the previous five steps. Therefore, in line with step (i), we start by introducing two semi-martingales, $Z^0_t$ and $Z^1_t$. These semi-martingales serve to enforce the dynamics upon the processes $X_t$ and $h_t$. As $X_t$ and $h_t$ both achieve values in $\RR_+$, we postulate an exponential form for these two processes. The SDE's of $Z^0_t$ and $Z^1_t$ accordingly read: 
\begin{equation}\label{eq:sec3.1.zz}
    \begin{aligned}
    \mathrm{d}Z^0_t&=Z^0_t\left[a_t\mathrm{d}t+b_t^{\top}\mathrm{d}W_t\right]\\
    \mathrm{d}Z^1_t&=Z^1_t\left[\phi_t\mathrm{d}t+\chi_t^{\top}\mathrm{d}W_t\right],
    \end{aligned}
\end{equation}
for some $Z^0_0,Z^1_0\in\RR_+$. The drift and diffusion terms of these two processes, i.e.~$a_t$, $b_t$, $\phi_t$ and $\chi_t$, as well as their starting values, $Z^0_0$ and $Z^1_0$, are (partially) determined as part of the identification process. As in \citet{KleinRoger:2007:Dualityinoptimal}, we note that $Z^0_t$ and $Z^1_t$ can be regarded as ordinary Lagrange multipliers for the constraints in \eqref{eq:sec2.2dynprob}. This is intuitive, if we view the dynamics as constraints that should be satisfied by $X_t,h_t,c_t$ and $\pi_t$.

In step (ii), we compute the dynamics of $X_tZ^0_t$ and $h_tZ^1_t$, using It\^{o}'s Lemma or integration-by-parts. The SDE of the product process $X_tZ^0_t$ is given by:
\begin{equation}\label{eq:sec3.1.xzdyn}
    \begin{aligned}
    \mathrm{d}X_tZ^0_t&=X_t\mathrm{d}Z^0_t+Z^0_t\mathrm{d}X_t+\mathrm{d}\left\langle X,Z^0\right\rangle_t\\&=X_tZ^0_t\left[\left(r_t+\pi_t^{\top}\sigma_t\lambda_t\right)\mathrm{d}t+\pi_t^{\top}\sigma_t\mathrm{d}W_t\right]-c_tZ^0_t\mathrm{d}t\\&+X_tZ^0_t\left[a_t\mathrm{d}t+b_t^{\top}\mathrm{d}W_t\right]+\pi_t^{\top}\sigma_t b_tX_tZ^0_t\mathrm{d}t.
    \end{aligned}
\end{equation}
Likewise, the SDE of the product process $h_t Z^1_t$ follows:
\begin{equation}\label{eq:sec3.1.xzdyn.h}
    \begin{aligned}
    \mathrm{d}Z^1_t h_t&=h_t\mathrm{d}Z^1_t+Z^1_t\mathrm{d} h_t+\mathrm{d}\left\langle h,Z^1\right\rangle_t\\&=h_tZ^1_t\left[\beta\log c_t -\alpha\log h_t+\phi_t \right]\mathrm{d}t+\chi_t^{\top}h_tZ^1_t\mathrm{d}W_t.
    \end{aligned}
\end{equation}
We assume that the diffusion terms in the preceding two processes are regular-enough, such that their stochastic integrals vanish in expectation. As pointed out by \citet{KleinRoger:2007:Dualityinoptimal}, this assumption has to be justified in the proof of the dual. In order to make the dynamics of the processes in \eqref{eq:sec3.1.xzdyn} and \eqref{eq:sec3.1.xzdyn.h} amenable to Lagrangian machinery, we (i) solve the SDE's at $t=T$, and (ii) take expectations on both sides of the resulting equations. By assumption, integration with respect to $\PP$ eliminates the two stochastic integrals. After this, we rewrite the identities such that the expectations equate to zero. In this way, the resulting expressions can be incorporated into the primal objective function in \eqref{eq:sec2.2dynprob}. For the product process $X_TZ^0_T$, these operations result in the following identity: 
\begin{equation}\label{eq:sec3.1.exp1}
    \begin{aligned}
    \EE\left[-X_TZ^0_T+\int_0^T\left(r_t+a_t+\pi_t^{\top}\sigma_t\left[\lambda_t+b_t\right]\right)X_tZ^0_t\mathrm{d}t-\int_0^Tc_tZ^0_t\mathrm{d}t\right]+Z^0_0X_0=0.
    \end{aligned}
\end{equation}
For $h_TZ^1_T$, we arrive at the following equation:
\begin{equation}\label{eq:sec3.1.exp2}
    \begin{aligned}
    \EE\left[h_TZ^1_T-\int_0^T\left[\beta\log c_t-\alpha\log h_t+\phi_t\right] Z^1_th_t\mathrm{d}t\right]-Z^1_0h_0=0.
    \end{aligned}
\end{equation}
For the identification of the dual, we remark that it makes no difference whether we multiply these equations by $-1$ or not, as this will affect both $Z^0_t$ and $Z^1_t$ at step (iv). 

In step (iii), we compute the Lagrangian functional corresponding to the optimal consumption problem in \eqref{eq:sec2.2dynprob}. In specifying this functional, we adhere to the ordinary Lagrangian principles. With the two identities in \eqref{eq:sec3.1.exp1} and \eqref{eq:sec3.1.exp2} at hand, it is fairly easy to construct the aforementioned functional. In fact, the Lagrangian develops from inserting the aforementioned identities into the primal objective, and subsequently optimising the ensuing expression over all endogenous processes $\left(c_t,h_t,X_t,\pi_t\right)\in\RR^3_+\times\RR^N$. These operations yield us the following expression for the Lagrangian, $\eL$:
\begin{equation}\label{eq:sec3.1.lagr}
\begin{aligned}
    \eL=\sup_{\left\lbrace {c}_t\right\rbrace,\left\lbrace {h}_t\right\rbrace,\left\lbrace X_t\right\rbrace,\left\lbrace \pi_t\right\rbrace}&\EE\Big[-Z^0_TX_T+Z^1_Th_T+\int_0^T\bigg\{U\left(t,\frac{{c}_t}{{h}_t}\right)\\&+\left(r_t+a_t+\pi_t^{\top}\sigma_t\left[\lambda_t+b_t\right]\right)X_tZ^0_t-c_tZ^0_t\\&-\left\lbrace\beta\log c_t-\alpha\log h_t+\phi_t\right\rbrace Z^1_th_t\bigg\}\mathrm{d}t\Big]+Z^0_0X_0-Z^1_0h_0.
\end{aligned}
\end{equation}

In step (iv), we determine the CS conditions corresponding to the optimisation problem outlined by $\eL$. Clearly, the characterisation of the CS conditions involves the specification of (local) maximisers of $\eL$. As for the CS conditions, we observe that maximisation over $X_T,h_T\geq 0$ results in a finite-valued Lagrangian functional, $\eL$, if $X_T=0$ and $Z^1_T=0$ hold true. Analogously, maximising $\eL$ over $X_t>0$ and $\pi_t\in\RR^N$, results in $\eL<\infty$, if the following conditions hold: $a_t\leq -r_t$ and $b_t=-\lambda_t$. To finalise this step, we have to compute $c_t$ and $h_t$ that optimise the objective function of $\eL$. Using elementary differentiation techniques, we find the following two first-order conditions (FOC's): 
\begin{equation}\label{eq:sec3.1.sys}
    \begin{aligned}
    &U'\left(t,\frac{{c}_t}{{h}_t}\right)\frac{1}{{h}_t}-Z_t^0-\beta\frac{h_t}{{c}_t}Z_t^1=0,\\ &U'\left(t,\frac{{c}_t}{{h}_t}\right)\frac{1}{{h}_t}+\left[\beta\log c_t-\alpha\log h_t+\phi_t\right] \frac{h_t}{c_t}Z^1_t -\alpha\frac{h_t}{c_t} Z_t^1=0.
    \end{aligned}
\end{equation}

Ordinarily, the identification procedure would continue by inserting the solutions to \eqref{eq:sec3.1.sys} into $\eL$. The resulting expression for $\eL$ is called the ``reduced Lagrangian''. In step (v), the candidate dual formulation is then identified as the problem in which the reduced Lagrangian is minimised over all $\left(Z^0_t,Z^1_t\right)$, such that the CS conditions are met. There are, however, two issues involved with these steps. The first issue is the impossibility to solve \eqref{eq:sec3.1.sys} in closed-form.\footnote{From \eqref{eq:sec3.1.sys}, for $\alpha=\beta$, we can derive that $ \frac{c_t}{h_t}=\beta \big({Z^1_t}/{Z^0_t}\big)W\big({e^{-{\phi_t}/{\beta}}Z^0_t}/\big({\beta Z^1_t}\big)\big)$ holds for all $t\in\left[0,T\right]$, where $W\left(\cdot\right)$ represents the Lambert function. By virtue of the intractability of the Lambert function, it is clear that neither $c_t$ nor $h_t$ can be obtained from the system of equations in closed-form. For the $\alpha\neq\beta$ case, it is not possible to derive such a semi-analytical expression for $\frac{c_t}{h_t}$.} Consequently, we cannot arrive at an analytically defined dual candidate. The second, more fundamental issue is that this dual candidate does not engender an upper bound on the primal value function. 

To be able to see this, let us introduce the function $L:\RR^2_+\rightarrow\RR$, given by: 
\begin{equation}\label{eq:sec3.1.funcl}
    L\left(x,y\right)=U\left(t,\frac{x}{y}\right)-xz_0-\left[\beta\left(\log x - \log y\right)+\phi\right]z_1 y,
\end{equation}
where we hold the constants $z_0,z_1,t\in\RR_+$ and $\phi\in\RR$ fixed. It is not difficult to show that the determinant of the Hessian of $L$, say $\mathrm{det}\left(H_L\right)$, is given by: $\mathrm{det}\left(H_L\right)=-U'\big(t,\frac{x}{y}\big)^2\frac{1}{y^4}+\left[\alpha-\beta\right]U''\left(t,\frac{x}{y}\right)\frac{z_1}{y_3}+\beta\left[\alpha-\beta\right]z_1^2\frac{1}{x^2}$. Now, we observe that:
\begin{equation}
\begin{aligned}
    \frac{\partial L}{\partial x}&=U'\left(t,\frac{x}{y}\right)\frac{1}{y}-z_0-\beta z_1\frac{y}{x},\\ \frac{\partial L}{\partial y}&=-U'\left(t,\frac{x}{y}\right)\frac{x}{y^2}-\left[\beta\left(\log x - \log y\right)+\phi\right]+\beta z_1.
\end{aligned}
\end{equation}
characterise the partial derivatives of $L$ in the $x$- and $y$-directions, respectively. The function $L$ attains a stationary point at $\left(x^{*},y^{*}\right)$, which ensues from solving $\frac{\partial L}{\partial x}=\frac{\partial L}{\partial y}=0$ for $\left(x,y\right)$. At $\left(x,y\right)=\left(x^{*},y^{*}\right)$, $\frac{\partial L}{\partial x\partial x}<0$ and $\mathrm{det}\left(H_L\right)<0$ both hold true. Therefore, we know that $\left(x^{*},y^{*}\right)$ describes a saddle point of $L$.  Observe that $\mathrm{det}\left(H_L\right)<0$ for \textit{all} $x,y\in\RR_+$ if $\alpha=\beta$. Noting that the FOC's in \eqref{eq:sec3.1.sys} correspond to the state-wise stationary point of $L\left(c_t,h_t\right)$ for $z_0=Z^0_t$, $z_1=Z^1_t$ and $\phi=\phi_t$, at all $t\in\left[0,T\right]$, we can conclude that the solution to the identities in \eqref{eq:sec3.1.sys} neither locally nor globally maximise $\eL$'s objective. We emphasise that fixing $z_0=Z^0_t$, $z_1=Z^1_t$ and $\phi=\phi_t$ is valid, as these processes are unaffected by $\left(c_t,h_t\right)$. As a result, we know that the non-analytical expression for the reduced Lagrangian, $\eL$, does not render an upper bound on the primal value function. Therefore, by means of the standard variant of \citet{KleinRoger:2007:Dualityinoptimal}'s identification method, we cannot derive a candidate for the dual corresponding to \eqref{eq:sec2.2dynprob}.

\begin{remark}\label{rem2}
Concerning the preceding identification, we wish to make three remarks. First, suppose that $Z^1_t$ is characterised by a general semi-martingale process:
\begin{equation}\label{eq:sec3.1.rem1}
    \begin{aligned}
    \mathrm{d}Z^1_t=\phi_t\mathrm{d}t+\chi_t^{\top}\mathrm{d}W_t,
    \end{aligned}
\end{equation}
for some $Z^1_0\in\RR$. According to \citet{rogers2003duality}, such a structure for the Lagrange multiplier processes is valid too. For the situation above, this specification of $Z^1_t$ does not manage to eliminate the non-concavity of $\eL$'s objective function. In particular, the mere term that changes in the previous derivation(s) is $\phi_t Z^1_t$. Instead of $\phi_t Z^1_t$, under $Z^1_t$ as given in \eqref{eq:sec3.1.rem1}, we would need to work with $\phi_t$ alone. Straightforwardly, this leaves most of the foregoing results unaffected, including the saddle point characterisation of the system of FOC's in \eqref{eq:sec3.1.sys}. We highlight this, because we employ a general semi-martingale process for $Z^1_t$, similar to \eqref{eq:sec3.1.rem1}, in the alternative identification procedure. Second, we observe that the heuristic derivation required $X_T$ to satisfy the following CS condition: $X_T=0$. As a consequence, the identification procedure confirms our assertion in section \ref{sec3}, stating that $X_T=0$ is hidden under the surface of \eqref{eq:sec2.2dynprob}. Third and last, we recall that the integrability conditions imposed upon $Z^0_t$ and $Z^1_t$ must be verified. For the proof of our main result, we stress that this verification step is integrated into the statement of Fenchel Duality. 
\end{remark}

\subsection{Alternative Identification}
In order to tackle the previous issue of non-concavity, we subsequently work with the logarithmic transformation of $h_t$, i.e.~$\log h_t$. The verification scheme of \citet{KleinRoger:2007:Dualityinoptimal} relies on control processes that attain values in conic subsets of $\RR$. This is obviously not the case for $\log h_t$. Therefore, a problem involving $\log h_t$ cannot be subsumed under their method's range of application. Accordingly, our approach is alternative, in the sense that it allows for the inclusion of endogenous constraint processes that achieve values in $\RR$. The corresponding verification scheme is supplied by the notion of Fenchel Duality. Under the specification of $\log h_t$ as a dynamic constraint, the core steps of \citet{KleinRoger:2007:Dualityinoptimal}'s non-adjusted identification procedure remain intact. Next, we consecutively revisit the five corresponding steps, and alter them in conformity with the $\log h_t$-constraint.


In step (i), we introduce $Z^0_t$ from \eqref{eq:sec3.1.zz}, and a new process for $Z^1_t$: 
\begin{equation}\label{eq:sec3.2.znew}
    \begin{aligned}
    \mathrm{d}Z^1_t=-\nu_t\mathrm{d}t+\chi_t^{\top}\mathrm{d}W_t,
    \end{aligned}
\end{equation}
for some $Z^1_0\in\RR$, and two processes $\nu_t$ and $\chi_t$ that are to be determined. Consistent with section \ref{sec4.1}, we assume that $\chi_t$ is regular-enough, such that a stochastic integral involving this term vanishes in expectation. The processes $Z^0_t$ and $Z^1_t$ force the dynamics upon $X_t$ and $\log h_t$, respectively. In contrast to $Z^1_t$ in \eqref{eq:sec3.1.zz}, we now assume that $Z^1_t$ is a infinite variation process that takes on values in the entirety of $\RR$. Note that this is intuitive from a Lagrangian perspective, as $\log h_t$ also achieves values on the entire real line, $\RR$.

In step (ii), we only compute the dynamics of $Z^1_t\log h_t$. As the dynamics of $X_tZ^0_t$ from section \ref{sec4.1} remain unchanged, the identity in \eqref{eq:sec3.1.exp1} still holds. Obviously, the expression in \eqref{eq:sec3.1.exp2} must be altered. In agreement with the procedure outlined in step (ii) of section \ref{sec4.1}, we are able to modify this expression into the following equation:   
\begin{equation}\label{eq:sec3.2.exp2}
    \EE\left[-Z^1_T\log h_T-\int_0^T\left[ Z^1_t\left(\beta\log c_t -\alpha\log h_t\right)-\nu_t \log h_t\right]\mathrm{d}t\right]=0.
\end{equation}

In step (iii), we are then able to assemble the Lagrangian functional corresponding to the problem in \eqref{eq:sec2.2dynprob}. Concretely, we insert the identity in \eqref{eq:sec3.1.exp1} along with the one in \eqref{eq:sec3.2.exp2} into the primal objective function. The Lagrangian, $\eL$, then originates from optimising the resulting expression over all endogenous primal control variables $\left(c_t,h_t,X_t,\pi_t\right)\in\RR^3_+\times\RR^N$. Note that this is conceptually equivalent to the derivation of $\eL$ in \eqref{eq:sec3.1.lagr}. Hence, it can be shown that the Lagrangian functional, $\eL$, reads as follows:
\begin{equation}\label{eq:sec3.2.lagr}
\begin{aligned}
    \eL=\sup_{\left\lbrace {c}_t\right\rbrace,\left\lbrace {h}_t\right\rbrace,\left\lbrace X_t\right\rbrace,\left\lbrace \pi_t\right\rbrace}&\EE\Big[-Z^0_TX_T+Z^1_T\log h_T+\int_0^T\bigg\{U\left(t,\frac{{c}_t}{{h}_t}\right)\\&+\left(r_t+a_t+\pi_t^{\top}\sigma_t\left[\lambda_t+b_t\right]\right)Z^0_tX_t-c_tZ^0_t\\&-\left[ Z^1_t\left(\beta \log c_t -\alpha\log h_t\right)-\nu_t \log h_t\right]\bigg\}\mathrm{d}t\Big]+Z^0_0X_0.
\end{aligned}
\end{equation}

In step (iv), we can use the optimisation problem spelled out by $\eL$ to specify the relevant CS conditions. In view of the fact that the dynamics corresponding to $X_tZ^0_t$ in \eqref{eq:sec3.2.lagr} are identical to those in \eqref{eq:sec3.1.lagr}, we find the same requirements for $X_T$, $a_t$ and $b_t$. That is, since $Z^0_T\geq 0$, $X_tZ^0_t\geq 0$ and $\pi_t\in\RR^N$, we must impose $X_T=0$, $a_t\leq -r_t$ and $b_t=-\lambda_t$ to ensure that $\eL<\infty$. As for the dynamics corresponding to $Z^1_t\log h_t$, we note that $\log h_t$ is stochastic and $\RR$-valued. Hence, to retain finiteness of $\eL$, we have to require that $Z^1_T=0$ holds. We emphasise that such a CS condition is the mere way of bounding the Lagrangian functional, $\eL$, in the presence of $\RR$-valued constraint processes. 

Ultimately, optimising $\eL$'s objective over $\left(c_t,h_t\right)\in\RR_+$ yields the following FOC's:
\begin{equation}\label{eq:sec3.2.sys}
    \begin{aligned}
    &U'\left(t,\frac{{c}_t}{{h}_t}\right)\frac{1}{{h}_t}-Z_t^0-\beta\frac{1}{{c}_t}Z_t^1=0,\\ & U'\left(t,\frac{{c}_t}{{h}_t}\right)\frac{1}{{h}_t}-\nu_t\frac{1}{c_t}-\alpha Z_t^1 \frac{1}{c_t}=0.
    \end{aligned}
\end{equation}
In contrast to the system of equations in \eqref{eq:sec3.1.sys}, the latter set of FOC's allows for an explicit solution in terms of $\left(c_t,h_t\right)$. To be more precise, on account of equivalence arguments, it is clear that $Z^0_t=\big(\nu_t+\left[\alpha-\beta\right]Z^1_t\big)\frac{1}{c_t^{*}}$ should hold. Re-inserting the identity for $c_t^{*}$ into either of the two FOC's implies an explicit solution for $h_t^{*}$. For the purpose of notation-related clarity, we endow these dual-feasibility conditions with an asterisk. The solutions to the system of two equations presented in \eqref{eq:sec3.2.sys} then live by: 
\begin{equation}\label{eq:sec3.2.sols}
    c_t^{*}=\frac{\nu_t+\left[\alpha-\beta\right]Z^1_t}{Z^0_t}\quad\mathrm{and}\quad h_t^{*}=\frac{c_t^{*}}{I\left(t,\nu_t+\alpha Z^1_t\right)}.
\end{equation}
Note that we have $h_t>0$ by construction, cf. \eqref{eq:sec2.2.habsol}. Additionally, because the agent derives utility from a function defined on the domain $\left[0,T\right]\times\RR_+$, we likewise know that $c_t>0$ must hold. Therefore, in order to assure that $c_t^{*},h_t^{*}>0$ hold true, we must impose that $\nu_t>-\alpha Z^1_t$ and $\nu_t>-\left[\alpha-\beta\right]Z^1_t$. These requirements can be combined as follows: $\nu_t>\max\left\lbrace -\alpha Z^1_t,-\left[\alpha-\beta\right]Z^1_t\right\rbrace$. Now, we note that $Z^1_T=0$ must hold, such that $Z^1_t=\EE\big[\int_t^T\nu_s\mathrm{d}s \ \big| \ \eF_t\big]$ can be shown to solve the SDE in \eqref{eq:sec3.2.znew}. For the special $\alpha=\beta$ case, it is easy to demonstrate that the condition imposed on $\nu_t$ reduces to $\nu_t>0$, due to which $Z^1_t>0$ must hold. With regard to the $\chi_t$ process, we observe that its specification is incorporated into $Z^1_t$'s definition, cf. Remark \ref{rem3} for details.

To demonstrate that \eqref{eq:sec3.2.sols} specifies a global maximum of $\eL$, let us introduce :
\begin{equation}\label{eq:sec3.2.function}
    F\left(x,y\right)=  U\left(t,\frac{x}{y}\right)-xz_0- z_1\left(\beta\log x - \alpha\log y\right) +\nu \log y,
\end{equation}
which describes a function $F:\RR^2_+\rightarrow\RR$. Similar to $L$ in \eqref{eq:sec3.1.funcl}, for $F$, we keep the constants $z_0,t\in\RR_+$ and $\nu,z_1\in\RR$ fixed and assume that $\nu>\max\left\lbrace-\alpha z_1,-\left[\alpha-\beta\right]z_1\right\rbrace$ holds. Suppose that $G:\RR^2\rightarrow\RR$ defines a function as follows: $G\left(v,w\right)=F\left(e^v,e^w\right)$ for all $v,w\in\RR$. One is able to show that the determinant of the Hessian of $G$, say $\mathrm{det}\left(H_G\right)$, equates to: $\mathrm{det}\left(H_G\right)=-e^{2v-w}z_0\big(U''\left(t,e^{v-w}\right)e^{v-w}+U'\left(t,e^{v-w}\right)\big)$. As $-x\frac{U''\left(t,x\right)}{U'\left(t,x\right)}>1$ for all $t\in\left[0,T\right]$ and $x\in\RR_+$ by assumption, it is evident that $\mathrm{det}\left(H_G\right)>0$ holds.  Furthermore, the second partial derivative of $G$ in the $v$-direction is characterised by: $\frac{\partial^2 G}{\partial v\partial v}=\big(U''\left(t,e^{v-w}\right)e^{v-w}+U'\left(t,e^{v-w}\right)\big)e^{v-w}-e^vz_0$. Obviously, $\frac{\partial^2 G}{\partial v\partial v}<0$ holds. As $H_G$ is consequently negative definite, the function $G$ is globally concave in $\left(v,w\right)$. We then note that $G$ achieves a stationary point where $v^{*}$ and $w^{*}$ satisfy the following equations:
\begin{equation}\label{eq:sec3.2.gstat}
    e^{v^{*}}=\frac{\nu+\left[\alpha-\beta\right]z_1}{z_0}\quad\mathrm{and}\quad e^{w^{*}}= \frac{e^{v^{*}}}{I\left(t,\nu+\alpha z_1\right)}.
\end{equation}
Since $G$ is globally concave in $\left(v,w\right)$, the stationary point $\left(v^{*},w^{*}\right)$, which solves \eqref{eq:sec3.2.gstat}, specifies a global maximum of $G$. As a result, $G\left(v^{*},w^{*}\right)\geq G\left(v,w\right)$ holds for all $v,w\in\RR$. Therefore, $F\left(e^{v^{*}},e^{w^{*}}\right)\geq F\left(x,y\right)$ is true, for all $x,y>0$. Setting $x^{*}=e^{v^{*}}$ and $y^{*}=e^{w^{*}}$, it is clear that $x^{*}=\frac{\nu+\left[\alpha-\beta\right]z_1}{z_0}$ and $y^{*}=\frac{e^{v^{*}}}{I\left(t,\nu+\alpha z_1\right)}$ globally maximise $F$. Note that $\left(x^{*},y^{*}\right)$ also defines the stationary point of $F$, i.e.~$x^{*}$ and $y^{*}$ jointly solve the following two equations: $U'\big(t,\frac{x^{*}}{y^{*}}\big)\frac{1}{y^{*}}-z_0-\frac{\beta z_1}{x^{*}}=0$ and $-U'\big(t,\frac{x^{*}}{y^{*}}\big)\frac{x^{*}}{y^{*^{2}}}+\frac{\nu+\alpha z_1}{y^{*}}=0$. The FOC's in \eqref{eq:sec3.2.sols} correspond to the state-wise stationary point of $F\left(c_t,h_t\right)$ for $z_0=Z^0_t$, $z_1=Z^1_t$ and $\phi=\phi_t$, at all $t\in\left[0,T\right]$. Hence, the solutions of \eqref{eq:sec3.2.sys}, presented in \eqref{eq:sec3.2.sols}, globally maximise $\eL$'s objective. Correspondingly, the expression for the reduced Lagrangian supplies a well-posed candidate for the dual objective, as it renders upper bound on $\EE\big[\int_0^TU\big(t,\frac{c_t}{h_t}\big)\mathrm{d}t\big]$. 

In step (v), we minimise this reduced Lagrangian over $Z^0_0$ and $\nu_t$:
\begin{equation}\label{eq:sec3.2.dual}
\begin{aligned}
    \inf_{\nu_t\in\Psi,Z^0_0\in\RR_+}&\EE\left[\int_0^T\left\lbrace -V_1\left(t,\nu_t+\alpha\EE\left[\int_t^T\nu_s\mathrm{d}s\cond\eF_t\right]\right)\right.\right.\\&\left.\left.-Z^0_0M_tV_2\left(\frac{\nu_t+\left(\alpha-\beta\right)\EE\left[\int_t^T\nu_s\mathrm{d}s\cond\eF_t\right]}{Z^0_0M_t}\right)\right\rbrace\mathrm{d}t\right]+Z^0_0X_0.
\end{aligned}
\end{equation}
Here, we define $V_1\left(x\right)=-U\left(t,I\left(t,x\right)\right)+x\log I\left(t,x\right)$ and $V_{2}\left(x\right)=x-x\log x$, for all $t\in\left[0,T\right]$ and $x\in\RR_+$. Furthermore, we let $\Psi$ be some convex space, to be determined. As in \citet{KleinRoger:2007:Dualityinoptimal}, we use here that the reduced Lagrangian is decreasing in $Z^0_t$ to set $a_t=-r_t$, rather than $a_t\leq -r_t$. As a consequence, $Z^0_t$ coincides up to its starting value, $Z^0_0\in\RR_+$, with the SPD in \eqref{eq:sec2.1.spd}, i.e.~$Z^0_t=Z^0_0M_t$ holds. Furthermore, we employ that $Z^1_t$ is entirely characterised in terms of $\left\lbrace\nu_s\right\rbrace_{s\in\left[t,T\right]}$, because $Z^1_t=\EE\big[\int_t^T\nu_s\mathrm{d}s \ \big| \ \eF_t\big]$ is true. Hence, as accounted for in the infimum's specification of \eqref{eq:sec3.2.dual}, the mere ``free'' dual variables are $\nu_t$ and $Z^0_0$. According to the slightly adjusted identification scheme of \citet{KleinRoger:2007:Dualityinoptimal}, the problem in \eqref{eq:sec3.2.dual} defines a proper candidate for the dual formulation of \eqref{eq:sec2.2dynprob}. Usage of the logarithmic transformation of $h_t$ as a constraint process, hence, manages (i) to tackle the inability to spell out analytical expression for the relevant FOC's, and (ii) to bypass non-concave specifications of the Lagrangian objective function.  

\begin{remark}\label{rem3}
As in Remark \ref{rem2}, we would like to make three observations. First, suppose that we define a new process, $\psi_t$, based on the dual control in \eqref{eq:sec3.2.dual}, $\nu_t$:
\begin{equation}\label{eq:sec3.2.rem1}
    \begin{aligned}
    \psi_t=\nu_t+\alpha\EE\left[\int_t^T\nu_s\mathrm{d}s\cond\eF_t\right],
    \end{aligned}
\end{equation}
for all $t\in\left[0,T\right]$. Then, we are able to regard the identity in \eqref{eq:sec3.2.rem1} as a Volterra equation for $\nu_t$ with the following solution: $\nu_t=\psi_t-\alpha\EE\big[\int_t^Te^{-\alpha\left(s-t\right)}\psi_s\mathrm{d}s \ \big| \ \eF_t\big]$, for all $t\in\left[0,T\right]$. As a consequence, one is able to re-define the candidate dual formulation in \eqref{eq:sec3.2.dual} in terms of $\psi_t$ alone. In that case, the second argument of $\left(t,x\right)\mapsto V_1\left(t,x\right)$ would read $\psi_t$; the argument of $x\mapsto V_2\left(x\right)$ would read $\frac{1}{Z^0_0M_t}\big(\psi_t-\beta\EE\big[\int_t^Te^{-\alpha\left(s-t\right)}\psi_s\mathrm{d}s\ \big| \ \eF_t\big]\big)$. The candidate formulation that follows from this re-definition is supplied in Theorem \ref{thm4.1} using Fenchel Duality. Note that these different formulations are identical to each other, i.e.~the identification procedure indeed generates a well-posed dual problem. Second, we stress that the $\chi_t$ process in the SDE for $Z^1_t$ in \eqref{eq:sec3.2.znew} does not play a role in the candidate formulation. Nevertheless, one is always able to obtain an explicit specification for this process by means of a simple application of It\^{o}'s Lemma to $Z^1_t$. Third and last, we observe that this alternative identification procedure does not work for the additive habit model. However, the standard variant of \citet{KleinRoger:2007:Dualityinoptimal}'s method, as employed in section \ref{sec4.1}, manages to generate a proper candidate dual for the additive configuration. This can be attributed to the structure of the problem in \eqref{eq:sec3.1.dynprob.additive}, cf.~Remark \ref{rem1}.
\end{remark}

\section{Dual Formulation}\label{sec4}
In this section, we provide the main result of this paper: the dual formulation of the optimal consumption problem in \eqref{eq:sec2.2dynprob}. We divide this section into three parts. First, we present the dual formulation and formalise that it satisfies strong duality, in Theorem \ref{thm4.1}. Observe that this dual formulation coincides with the candidate problem identified in \eqref{eq:sec3.2.dual}. In the same part, we briefly comment on the proof required to obtain the dual formulation. Second, we present the complete mathematical proof of the corresponding strong duality result. Third and last, we discuss certain implications of the aforementioned result concerning the analytical structure of the optimal primal and dual controls. 

\subsection{Main Result: Strong Duality}
Theorem \ref{thm4.1} contains the main result of this paper. Its statement formalises the fact that the optimal (dual) control problem in \eqref{eq:sec4.2.dual} and the optimal consumption problem in \eqref{eq:sec2.2dynprob} are dual to each other, satisfying strong duality. First, we provide the theorem itself. Second, we comment on its corresponding proof. 
\begin{theorem}\label{thm4.1}
Consider the optimal consumption problem in \eqref{eq:sec2.2dynprob} and define the primal objective function: $J\left(X_0,\left\lbrace c_t,\pi_t\right\rbrace\right)=\EE\big[\int_0^TU\big(t,\frac{c_t}{h_t}\big)\mathrm{d}t\big]$. Furthermore, introduce the following two concave conjugates: $V_1\left(t,x\right)=\inf_{z\in\RR}\left\lbrace -U\left(t,e^{-z}\right)-xz\right\rbrace=-U\left(t,I\left(t,x\right)\right)+x\log I\left(t,x\right)$ and $V_2\left(x\right)=\inf_{z\in\RR}\left\lbrace e^z-xz\right\rbrace=x-x\log x$, for all $t\in\left[0,T\right]$ and $x\in\RR_+$. Then, the dual formulation of the optimal consumption problem in \eqref{eq:sec2.2dynprob} is given by:
\begin{equation}\label{eq:sec4.2.dual}
\begin{aligned}
    \inf_{\psi_t\in L^2\left(\Omega\times\left[0,T\right]\right),\eta\in\RR_+}&\EE\left[\int_0^T\left\lbrace -{V}_1\left(t,\psi_t\right) \right.\right. \\&\left.\left.-\eta M_t{V}_2\left(\frac{\psi_t-\beta\EE\left[\int_t^Te^{-\alpha\left(s-t\right)}\psi_s\mathrm{d}s\cond\eF_t\right]}{\eta M_t}\right)\right\rbrace\mathrm{d}t\right]+\eta X_0.
\end{aligned}
\end{equation}
Suppose that $\mathcal{V}\left(X_0,\psi_t,\eta \right)$ represents the dual objective function of \eqref{eq:sec4.2.dual}. Then, the problems in \eqref{eq:sec2.2dynprob} and \eqref{eq:sec4.2.dual} satisfy strong duality, for all $X_0\in\RR_+$: 
\begin{equation}\label{eq:thm4.1.eq1}
    \sup_{\left\lbrace c_t,\pi_t\right\rbrace_{t\in\left[0,T\right]}\in{\mathcal{A}}_{X_0}}J\left(X_0,\left\lbrace c_t,\pi_t\right\rbrace\right)=\inf_{\psi_t\in L^2\left(\Omega\times\left[0,T\right]\right),\eta \in\RR_+}\mathcal{V}\left(X_0,\psi_t,\eta\right).
\end{equation}
\end{theorem}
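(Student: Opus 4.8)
The plan is to prove strong duality in two directions: weak duality (the primal value is at most the dual value) followed by the existence of a primal-dual pair that closes the gap. For weak duality, I would exploit the two concave conjugates. By definition of $V_1$ and $V_2$ as infima, we have the pointwise Fenchel-Young inequalities $-U(t,e^{-z})\geq V_1(t,x)+xz$ and $e^z\geq V_2(x)+xz$ for all real $z$ and $x\in\RR_+$. Substituting $z=\log h_t$ into the first and $z=\log(c_t/(\eta M_t))$ (suitably arranged) into the second, I would recombine terms so that the utility $U(t,c_t/h_t)$ is bounded above by the dual integrand plus a linear functional in the primal controls. The key algebraic task is to show that this linear functional, after a change in the order of integration in the $\beta\EE[\int_t^T e^{-\alpha(s-t)}\psi_s\,\mathrm{d}s\mid\eF_t]$ term, collapses to $\eta\,\EE[\int_0^T M_t c_t\,\mathrm{d}t]$, which by the budget/supermartingale argument already established around \eqref{eq:sec3.2.ineq2} is bounded by $\eta X_0$. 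This delivers $J(X_0,\{c_t,\pi_t\})\leq\mathcal{V}(X_0,\psi_t,\eta)$ for every admissible primal pair and every dual-feasible $(\psi_t,\eta)$, hence $\sup\le\inf$.

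For the reverse inequality, I would invoke Fenchel's Duality Theorem (Proposition \ref{fenchelduality}) directly. The point of the heuristic identification in section \ref{sec44} is that the primal objective, after the change of variables to $\log h_t$, becomes a concave functional of a bounded linear image of $\log c_t$, so the abstract Fenchel framework applies with $f^*$ the conjugate objective and $A^*$ the bounded linear operator described in Remark \ref{rem1}. The strategy is to verify the regularity hypotheses of Proposition \ref{fenchelduality} — essentially a constraint-qualification/interior-point condition guaranteeing no duality gap — and then read off that the primal and dual optimal values coincide. The global concavity established via the function $G(v,w)=F(e^v,e^w)$ in \eqref{eq:sec3.2.function}, together with the assumption $-x U''/U'>1$, is exactly what supplies the concavity needed to apply Fenchel rather than mere weak duality.

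The cleaner self-contained route, which I would actually carry out, is to construct an explicit optimiser and show it attains equality. From the first-order conditions \eqref{eq:sec3.2.sols}, the optimal dual controls induce candidate primal controls $c_t^*,h_t^*$ via the duality relations; I would then verify that the resulting consumption process satisfies $\EE[\int_0^T M_t c_t^*\,\mathrm{d}t]=X_0$, so that $\{\int_0^t c_s^* M_s\,\mathrm{d}s\}$ is a genuine $\PP$-martingale and the supermartingale inequality in \eqref{eq:sec3.2.ineq2} holds with equality. Simultaneously, choosing $z=\log h_t^*$ and the matching argument in $V_2$ makes both Fenchel-Young inequalities tight pointwise. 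Admissibility of $\{c_t^*,\pi_t^*\}$ then follows from a completeness/hedging argument as in Lemma 2.2 of \citet{CoxHuang:1989:Optimalconsumptionand}, using the assumed $L^2$-integrability of $M_t$ and $\log M_t$ to place $\log c_t^*$ in $L^2(\Omega\times[0,T])$.

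The main obstacle, I expect, is the attainment/integrability step rather than weak duality. Because the habit is path-dependent through the Volterra relation $\nu_t=\psi_t-\alpha\EE[\int_t^T e^{-\alpha(s-t)}\psi_s\,\mathrm{d}s\mid\eF_t]$, the candidate consumption $c_t^*$ and habit $h_t^*$ are defined only implicitly via a forward-backward system, and showing that the optimal $\psi_t$ lands in $L^2(\Omega\times[0,T])$ while keeping $\log c_t^*\in L^2$ is delicate. I would handle this by first establishing existence of a minimiser for the dual (using lower semicontinuity and coercivity of $\mathcal{V}$ in $(\psi_t,\eta)$ on the reflexive space $L^2\times\RR_+$), then propagating the resulting integrability through the bounded linear Volterra operator — which preserves $L^2$ by Young's inequality for the exponential kernel — to conclude that the induced primal controls are admissible and close the gap.
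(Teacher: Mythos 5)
Your weak-duality half is sound: taking $z=\log h_t-\log c_t=A^{*}\left(-\log c_t\right)$ in the $V_1$ inequality and $z=\log c_t$ in the $V_2$ inequality, and using adjointness of the Volterra operator (change of order of integration) to collapse the cross terms to $\eta\,\EE\big[\int_0^TM_tc_t\mathrm{d}t\big]\leq \eta X_0$, does give $J\leq\mathcal{V}$; this is precisely what Proposition \ref{fenchelduality} delivers automatically as $d^{*}\leq p^{*}$. The genuine gap lies in the route you say you would \emph{actually carry out}. The identity \eqref{eq:thm4.1.eq1} asserts equality of \emph{values}, not attainment of the dual infimum, and the paper's proof never constructs an optimiser. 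Your construction needs (a) existence of a dual minimiser via ``lower semicontinuity and coercivity'' of $\mathcal{V}$ on $L^2\times\RR_+$ --- no coercivity estimate is supplied, and for the general utility class admitted here (only $U'>0$, $U''<0$, $-xU''/U'>1$) it is not clear the dual functional is coercive in $\psi_t$; and (b) passage from a minimiser to the pointwise duality relations \eqref{eq:dualrelation} holding simultaneously, together with budget equality $\EE\big[\int_0^TM_tc_t^{*}\mathrm{d}t\big]=X_0$. Step (b) amounts to solving the implicit Volterra/forward-backward system coupling $\psi_t$ to conditional expectations of its own future values --- exactly the system the paper stresses admits no closed form (cf.~the Lambert-function footnote in section \ref{sec44}, and section \ref{sec6}, which exists precisely because the optimal controls are analytically unavailable). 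Noting that the Volterra operator preserves $L^2$ gives you mapping properties, not a solution to the system, nor simultaneous tightness of both Fenchel--Young inequalities. As written, the attainment argument does not go through, and the ``main obstacle'' you identify is one your route creates rather than one the theorem poses.

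Your fallback --- invoking Proposition \ref{fenchelduality} under a constraint qualification --- is in fact the paper's route, but you leave its entire content unverified and skip a structurally necessary step. Fenchel's theorem, as stated, applies to the problem in $-\log c_t$ \emph{without} the budget constraint; the wealth dynamics and the scalar multiplier $\eta$ must first be removed by the reduction of Lemma \ref{lemma1}, which uses the Cox--Huang/Cvitani\'{c}--Karatzas equivalence of admissibility with $\EE\big[\int_0^Tc_tM_t\mathrm{d}t\big]\leq X_0$ and $\log c_t\in L^2\left(\Omega\times\left[0,T\right]\right)$, a strictly feasible Slater point $c_t=X_0\epsilon\big(\EE\big[\int_0^TM_t\mathrm{d}t\big]\big)^{-1}$, and Luenberger's Lagrange duality theorem to pull $\inf_{\eta\in\RR_+}$ outside; your sketch never addresses how $\eta$ enters the Fenchel template. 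Then, for fixed $\eta$, one must identify $f$ and $g$, compute the adjoint $Ax_t=x_t-\beta\EE\big[\int_t^Te^{-\alpha\left(s-t\right)}x_s\mathrm{d}s\cond\eF_t\big]$, prove its $L^2$-boundedness as in \eqref{eq:app.ineqset}, and verify condition (ii), $A\,\mathrm{dom}\,f\cap\mathrm{cont}\,g\neq\phi$. With those verifications supplied, strong duality follows with attainment only on the consumption (sup) side --- which is all the theorem requires --- and no dual minimiser, coercivity argument, or explicit optimiser is needed.
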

\begin{proof}
The proof is given in section \ref{appendix:a}. 
\end{proof}

Typically, the Legendre transform alone suffices to establish a strong duality result. However, due to the non-concavity and path-dependency of the objective of \eqref{eq:sec2.2dynprob}, the Legendre transform cannot be used to derive strong duality. We have extensively analysed these duality-linked issues in section \ref{sec3.3}. Therefore, to prove Theorem \ref{thm4.1}, we apply a change of variables and employ Fenchel Duality, cf.~Proposition \ref{fenchelduality}. This form of duality can be regarded as a generalisation of the Legendre result to problems involving path-dependent linear transformations of one of the control variables. On the basis of Fenchel Duality, deriving strong duality for problems \eqref{eq:sec2.2dynprob} and \eqref{eq:sec4.2.dual} is straightforward. First, we re-express the primal problem \eqref{eq:sec2.2dynprob} in terms of its static equivalent and $\log c_t$. Second, we use Fenchel Duality to demonstrate that strong duality holds for the static problem and $\inf_{\psi_t\in L^2\left(\Omega\times\left[0,T\right]\right)}\mathcal{V}\left(X_0,\psi_t,\eta\right)$. Third and last, we resort to a technical argument (Lemma \ref{lemma1}) in order to extend this strong duality result to \eqref{eq:sec2.2dynprob} and \eqref{eq:sec4.2.dual}. To conclude, let us emphasise that the dual formulations in \eqref{eq:sec3.2.dual} and \eqref{eq:sec4.2.dual} are identical. For details on the latter result, we refer to our first observation included in Remark \ref{rem3}.
\begin{remark}\label{rem4}
To recover the no-habit case from Theorem \ref{thm4.1}, it suffices to fix $\alpha=\beta=0$. Setting $\alpha=\beta=0$ in \eqref{eq:sec4.2.dual} provides us with the following dual formulation: 
\begin{equation}\label{eq:sec4.2.dual2}
\begin{aligned}
    \inf_{\psi_t\in L^2\left(\Omega\times\left[0,T\right]\right),\eta\in\RR_+}&\EE\left[\int_0^T\left\lbrace -{V}_1\left(t,\psi_t\right) -\eta M_t{V}_2\left(\frac{\psi_t}{\eta M_t}\right)\right\rbrace\mathrm{d}t\right]+\eta X_0.
\end{aligned}
\end{equation}
Here, the dual forces $\left\lbrace \psi_t\right\rbrace_{t\in\left[0,T\right]}$ to satisfy $\psi_t>0$ for all $t\in\left[0,T\right]$. In line with the exclusion of $h_t$ in the primal, the no-habit dual does not contain the $\EE\big[\int_t^Te^{-\alpha\left(s-t\right)}\psi_s\mathrm{d}s\ \big| \ \eF_t\big]$ term. The dual in \eqref{eq:sec4.2.dual2} differs from the conventional one in e.g.~\citet{cvitanic1992convex}. Note that this is not troublesome, as the dual formulations for convex optimisation problems are not unique, cf.~\citet{rockafellar2015convex}. In fact, after inserting the optimal dual control, say $\psi_t^{\mathrm{opt}}$, into \eqref{eq:sec4.2.dual2}, we find the conventional formulation: $\inf_{\eta\in\RR_+}\EE\big[\int_0^TV\big(t,\eta M_t\big)\mathrm{d}t\big]$, where $V\left(t,x\right)=\sup_{x\in\RR_+}\big\{U(t,z)-xz\big\}$. For the analytical specification of the optimal no-habit dual control, $\psi_t^{\mathrm{opt}}$, we refer the reader to Example \ref{ex1}. The aforementioned difference is attributable to the fact that this dual ensues from an application of Fenchel Duality. Namely, this notion of duality involves two convex conjugates instead of one. Moreover, it requires one to re-express the primal control as follows: $c_t=e^{-\left(-\log c_t\right)}$ for all $t\in\left[0,T\right]$. Due to these two features, the dual accommodates two functions, $V_1$ and $V_2$ that coincide with the concave conjugates of $x\mapsto -U\left(t,e^{-x}\right)$ and $x\mapsto e^x$, respectively.
\end{remark}

\subsection{Proof of Theorem \ref{thm4.1}}\label{appendix:a}
In this section, we provide the proof of Theorem \ref{thm4.1}. To prove that \eqref{eq:sec4.2.dual2} is the dual formulation of \eqref{eq:sec2.2dynprob}, we make use of Fenchel Duality as formalised in Theorem 4.3.3 of the textbook by \citet{borwein2004techniques}. As this theorem lies at the heart of our proof, we provide its statement in the following proposition. 
\begin{proposition}\label{fenchelduality}
Let $f:X\rightarrow\RR\cup\left\lbrace \infty\right\rbrace$ and $g:Y\rightarrow\RR\cup\left\lbrace \infty\right\rbrace$ be two continuous and convex functions. Additionally, introduce the bounded linear map $A:X\rightarrow Y$. Here, $X$ and $Y$ outline two Banach spaces. Then, the Fenchel problems are given by:
\begin{equation}\label{eq:fenchel.eq1}
    \begin{aligned}
    p^{*}&=\inf_{x\in X}\left\lbrace f\left(x\right)+g\left(Ax\right)\right\rbrace\\
    d^{*}&=\sup_{y^{*}\in Y}\left\lbrace -f^{*}\left(A^*y^*\right)-g^*\left(-y^*\right)\right\rbrace,
    \end{aligned}
\end{equation}
and satisfy weak duality, $d^{*}\leq p^{*}$. Here, $f^{*}$ and $g^{*}$ represent the convex conjugates of $f$ and $g$, respectively, i.e.~$f^{*}\left(x\right)=\sup_{z\in X}\left\lbrace \left\langle x,z\right\rangle-f\left(z\right)\right\rbrace$ and $g^{*}\left(y\right)=\sup_{z\in Y}\left\lbrace \left\langle y,z\right\rangle-g\left(z\right)\right\rbrace$, for all $x\in X^{*}$ and $y\in Y^{*}$. Note that $X^{*}$ and $Y^{*}$ are the dual spaces of $X$ and $Y$, respectively. Moreover, $A^{*}$ is the adjoint of $A$. Strong duality, i.e.~$p^{*}=d^{*}$, holds if either of the following conditions is fulfilled:\\
\begin{enumerate}[label=(\roman*)]
    \item $0\in\mathrm{core}\left(\mathrm{dom} \ g- A\ \mathrm{dom} \ f\right)$ and $f$ and $g$ are both lower semi-continuous. Here, $\mathrm{core}$ stands for the algebraic interior, and $\mathrm{dom} \ h$ is given by $\mathrm{dom} \ h=\left\lbrace z\mid h\left(z\right)<\infty\right\rbrace$ for any function $h$;\\
    \item $A\ \mathrm{dom} \ f\cap \mathrm{cont} \  g\neq\phi$, where $\mathrm{cont}$ are the points where the function is continuous.\\ 
\end{enumerate}
Moreover, if $\left|d^{*}\right|<\infty$ holds, then the supremum in \eqref{eq:fenchel.eq1} is attained.
\end{proposition}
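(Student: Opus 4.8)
The plan is to prove weak duality by an elementary conjugacy inequality and to obtain strong duality by a perturbation (value-function) argument, reducing the equality $p^{*}=d^{*}$ to the existence of a subgradient of the optimal-value function at the origin.

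First I would establish weak duality, which needs nothing beyond the definition of the conjugate. For arbitrary $x\in X$ and $y^{*}\in Y^{*}$, the Fenchel--Young inequality applied to the two conjugates gives $f(x)+f^{*}(A^{*}y^{*})\ge\langle A^{*}y^{*},x\rangle=\langle y^{*},Ax\rangle$ and $g(Ax)+g^{*}(-y^{*})\ge\langle -y^{*},Ax\rangle$. Adding and rearranging yields $f(x)+g(Ax)\ge -f^{*}(A^{*}y^{*})-g^{*}(-y^{*})$; taking the infimum over $x$ on the left and the supremum over $y^{*}$ on the right then delivers $d^{*}\le p^{*}$.

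Second, for the reverse inequality I would introduce the perturbation (optimal-value) function $v:Y\to\RR\cup\{\pm\infty\}$ defined by $v(u)=\inf_{x\in X}\{f(x)+g(Ax+u)\}$, so that $p^{*}=v(0)$. Since $f,g$ are convex and $A$ is linear, $v$ is convex, and $\mathrm{dom}\,v=\mathrm{dom}\,g-A\,\mathrm{dom}\,f$, the very set appearing in condition (i). If $p^{*}=-\infty$, weak duality forces $d^{*}=-\infty=p^{*}$ and there is nothing to prove, so I would assume $p^{*}$ finite. The key reduction is then: suppose $\phi\in\partial v(0)$, i.e.\ $v(u)\ge v(0)+\langle\phi,u\rangle$ for all $u$. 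Feeding this into the defining inequality $f(x)+g(Ax+u)\ge v(0)+\langle\phi,u\rangle$ and substituting $w=Ax+u$ separates the two variables,
\begin{equation*}
\bigl(f(x)+\langle A^{*}\phi,x\rangle\bigr)+\bigl(g(w)-\langle\phi,w\rangle\bigr)\ge v(0)\quad\text{for all }x\in X,\ w\in Y .
\end{equation*}
Taking the infimum over $x$ and over $w$ independently identifies the two bracketed terms as $-f^{*}(-A^{*}\phi)$ and $-g^{*}(\phi)$ respectively, whence $-f^{*}(A^{*}y^{*})-g^{*}(-y^{*})\ge p^{*}$ with $y^{*}:=-\phi$. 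Combined with weak duality this gives $d^{*}=p^{*}$ and simultaneously shows that the supremum defining $d^{*}$ is attained at $y^{*}$, which also settles the final attainment clause.

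The main obstacle is therefore to establish $\partial v(0)\neq\emptyset$ under either constraint qualification. Under (ii) this is comparatively soft: a point $Ax_{0}\in\mathrm{cont}\,g$ with $x_{0}\in\mathrm{dom}\,f$ makes $v$ bounded above on a neighbourhood of $0$ through $v(u)\le f(x_{0})+g(Ax_{0}+u)$, and a convex function bounded above near a point is continuous there, hence subdifferentiable by a Hahn--Banach separation argument. Under (i) the argument is genuinely harder, and this is where completeness of the Banach spaces is essential: the condition $0\in\mathrm{core}(\mathrm{dom}\,v)$ with $f,g$ merely lower semicontinuous does \emph{not} by itself make $v$ continuous at $0$, so I would avoid arguing through continuity of $v$. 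Instead I would invoke the variational-principle machinery underpinning \citet{borwein2004techniques}---concretely the decoupling lemma, obtained from Ekeland's variational principle together with a local sum rule---which produces the separating functional $\phi$ directly from the core condition without requiring $v$ itself to be lower semicontinuous. Securing (or citing) this decoupling step is the crux; once $\phi\in\partial v(0)$ is in hand, the separation computation above closes the proof.
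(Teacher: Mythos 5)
The paper contains no proof of this proposition at all---it is Theorem 4.3.3 of \citet{borwein2004techniques}, cited verbatim with proof deferred to page 136 of that book---and your outline reconstructs precisely the cited book's argument: Fenchel--Young for weak duality, the perturbation function $v(u)=\inf_{x}\{f(x)+g(Ax+u)\}$ with the separation computation turning $\phi\in\partial v(0)$ into a dual maximiser $y^{*}=-\phi$ (which also yields the attainment clause), continuity of $v$ near $0$ under qualification (ii), and the decoupling lemma from the variational-principle machinery under qualification (i). Your proposal is correct and matches the source's approach, and you correctly identify the one genuinely delicate point---that under (i) the value function $v$ need not be lower semicontinuous, so one cannot argue through continuity of $v$ and completeness must enter through the Ekeland-based decoupling step---so delegating that step to the cited machinery is exactly as self-contained as the paper itself.
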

\begin{proof}
See page 136 of \citet{borwein2004techniques}.
\end{proof}

We continue by aligning the notation of our primal and dual problems with the notation of Proposition \ref{fenchelduality}. To this end, we start by deriving an alternative representation of \eqref{eq:sec2.2dynprob}. This alternative representation is based on the static formulation of optimal investment-consumption problems, due to  \citet{pliska1986stochastic}, \citet{karatzas1987optimal}, and \citet{CoxHuang:1989:Optimalconsumptionand,cox1991variational}. This static counterpart can be obtained by means of Lagrangian machinery. We provide this alternative formulation in the subsequent lemma. 

\begin{lemma}\label{lemma1}
Define the following function: 
\begin{equation}
\begin{aligned}
    \mathcal{J}\left(X_0,-\log c_t,\eta\right)=\EE\left[\int_0^TU\left(t,e^{\log c_t-\log h_t}\right)\mathrm{d}t\right]-\eta\EE\left[\int_0^Te^{\log c_t}M_t\mathrm{d}t\right]+\eta X_0.
\end{aligned}
\end{equation}
Then, for all $X_0\in\RR_+$, the following optimisation problems are identical:
\begin{equation}\label{eq:lem1.main2}
    \sup_{\left\lbrace c_t,\pi_t\right\rbrace_{t\in\left[0,T\right]}\in{\mathcal{A}}_{X_0}}J\left(X_0,\left\lbrace c_t,\pi_t\right\rbrace\right)=\inf_{\eta\in\RR_+}\sup_{-\log c_t\in L^2\left(\Omega\times\left[0,T\right]\right)}\mathcal{J}\left(X_0,-\log c_t,\eta\right).
\end{equation}
\end{lemma}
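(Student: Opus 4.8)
The plan is to prove Lemma~\ref{lemma1} in two stages: first I would replace the dynamic budget constraint by a single static (martingale) constraint, and then dualise that constraint through a textbook Lagrangian argument. Concretely, exploiting completeness of $\mathcal{M}$, the first stage establishes
\[
\sup_{\{c_t,\pi_t\}\in\mathcal{A}_{X_0}}\EE\left[\int_0^T U\left(t,\frac{c_t}{h_t}\right)\mathrm{d}t\right]=\sup_{\substack{\log c_t\in L^2(\Omega\times[0,T])\\ \EE\left[\int_0^T M_t c_t\,\mathrm{d}t\right]\le X_0}}\EE\left[\int_0^T U\left(t,\frac{c_t}{h_t}\right)\mathrm{d}t\right].
\]
For the inequality ``$\le$'', I would take an admissible pair $(c,\pi)$ and apply It\^{o}'s Lemma to $M_tX_t$: by the SPD dynamics in \eqref{eq:sec2.1.spd} and the wealth dynamics in \eqref{eq:sec2.1bc} the drift cancels, so $M_tX_t+\int_0^t M_sc_s\,\mathrm{d}s$ is a nonnegative local martingale and hence a supermartingale; taking expectations and using $X_T\ge 0$ yields the static budget inequality. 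For ``$\ge$'', any consumption stream meeting the static constraint is financeable --- the martingale $\EE[\int_0^T M_sc_s\,\mathrm{d}s\mid\eF_t]$ admits an integral representation, from which a hedging $\pi$ and admissibility are recovered; this is exactly Lemma~2.2 of \citet{CoxHuang:1989:Optimalconsumptionand} and Proposition~7.3 of \citet{cvitanic1992convex}, as already used in section~\ref{sec3.2}.

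For the second stage I would write $\ell_t=\log c_t$ and observe that $\log c_t-\log h_t=(A\ell)_t:=\ell_t-\beta\int_0^t e^{-\alpha(t-s)}\ell_s\,\mathrm{d}s$ defines a bounded linear Volterra operator on $L^2$. The objective $\Phi(\ell):=\EE[\int_0^T U(t,e^{(A\ell)_t})\,\mathrm{d}t]$ is then concave: the standing assumption $-xU''(t,x)/U'(t,x)>1$ is precisely the statement that $x\mapsto U(t,e^x)$ is concave, and composition with the linear map $A$ followed by integration preserves concavity. Similarly $\Psi(\ell):=\EE[\int_0^T M_t e^{\ell_t}\,\mathrm{d}t]$ is convex because $x\mapsto e^x$ is convex and $M_t\ge 0$. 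Hence the static problem is a concave maximisation over the convex set $\{\ell:\Psi(\ell)\le X_0\}$, whose Lagrangian for the multiplier $\eta\ge 0$ is exactly $\mathcal{J}(X_0,-\ell,\eta)=\Phi(\ell)-\eta\Psi(\ell)+\eta X_0$. Weak duality $\sup_\ell\inf_{\eta}\le\inf_{\eta}\sup_\ell$ is automatic; for the no-gap direction I would verify Slater's condition by taking $c_t\equiv\epsilon$, which keeps $\log c_t$ bounded (so in $L^2$) and makes $\Psi<X_0$ for $\epsilon$ small, and then invoke the standard convex-programming saddle-point theorem to obtain $\sup_{\ell:\Psi(\ell)\le X_0}\Phi(\ell)=\inf_{\eta\ge 0}\sup_{\ell}\mathcal{J}(X_0,-\ell,\eta)$. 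Chaining this with the first stage delivers \eqref{eq:lem1.main2}.

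The hard part will be justifying the sup--inf interchange of the second stage in this infinite-dimensional setting: one must confirm the constraint qualification and rule out pathologies of the inner supremum, chiefly its finiteness. Here the integrability postulates built into $\mathcal{A}_{X_0}$ and on $M_t$ and $\log M_t$, together with the growth control $\int_0^T U'(t,M)^2\,\mathrm{d}t<\infty$, are what keep $\sup_{\ell}\mathcal{J}(X_0,-\ell,\eta)$ well posed for each $\eta>0$. By contrast, the local-martingale-to-supermartingale passage and the admissibility of the replicating strategy in the first stage are routine given the cited completeness results, so I do not expect them to pose real difficulty.
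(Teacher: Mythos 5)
Your proposal is correct and takes essentially the same route as the paper's own proof: the same static reformulation of the dynamic problem via Lemma~2.2 of \citet{CoxHuang:1989:Optimalconsumptionand} and Proposition~7.3 of \citet{cvitanic1992convex}, followed by Lagrangian duality justified by strict feasibility (your constant stream $c_t\equiv\epsilon$ is the same Slater point as the paper's $c_t=X_0\epsilon\big(\EE\big[\int_0^TM_t\,\mathrm{d}t\big]\big)^{-1}$), concavity/convexity in $\log c_t$ derived from the elasticity assumption $-x\frac{U''\left(t,x\right)}{U'\left(t,x\right)}>1$ through the exponential change of variables, and a Luenberger-type saddle-point theorem (the paper cites Theorem~1 on page~217 of \citet{luenberger1997optimization}). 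Your additional details---the It\^{o}/supermartingale argument for the necessity direction and the explicit identification of the bounded linear Volterra operator $A$---merely spell out steps that the paper either cites or defers to the Fenchel-duality portion of the proof of Theorem~\ref{thm4.1}.
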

\begin{proof}
By arguments similar to those that yield Lemma 2.2 in \citet{CoxHuang:1989:Optimalconsumptionand} and Proposition 7.3 in \citet{cvitanic1992convex}, we know that $\left\lbrace c_t,\pi_t\right\rbrace_{t\in\left[0,T\right]}\in{\mathcal{A}}_{X_0}$ if and only if $\left\lbrace c_t\right\rbrace_{t\in\left[0,T\right]}$ satisfies $\EE\left[\int_0^Tc_tM_t\mathrm{d}t\right]\leq X_0$ and $\log c_t\in L^2\left(\Omega\times\left[0,T\right]\right)$. Therefore, maximisation of $J\left(X_0,\left\lbrace c_t,\pi_t\right\rbrace\right)$ over $\left\lbrace c_t,\pi_t\right\rbrace_{t\in\left[0,T\right]}\in{\mathcal{A}}_{X_0}$ is the same as maximisation of $J\left(X_0,\left\lbrace c_t,\pi_t\right\rbrace\right)$ over all $\log c_t\in L^2\left(\Omega\times\left[0,T\right]\right)$ such that $\EE\left[\int_0^Tc_tM_t\mathrm{d}t\right]\leq X_0$ holds. As a result, we are able to derive the following set of equations:
\begin{equation}\label{eq:lem1.eq1}
    \begin{aligned}
    &\sup_{\left\lbrace c_t,\pi_t\right\rbrace_{t\in\left[0,T\right]}\in{\mathcal{A}}_{X_0}} \ \EE\left[\int_0^TU\left(t,\frac{c_t}{h_t}\right)\mathrm{d}t\right]\\&=\sup_{\log c_t\in L^2\left(\Omega\times\left[0,T\right]\right)\ \mathrm{s.t.} \  \EE\left[\int_0^Tc_tM_t\mathrm{d}t\right]\leq X_0} \ \EE\left[\int_0^TU\left(t,\frac{c_t}{h_t}\right)\mathrm{d}t\right]\\&=\inf_{\eta\in\RR_+}\left(\sup_{-\log c_t\in L^2\left(\Omega\times\left[0,T\right]\right)} \left\lbrace\EE\left[\int_0^TU\left(t,\frac{c_t}{h_t}\right)\mathrm{d}t\right]-\eta \EE\left[\int_0^Tc_tM_t\mathrm{d}t\right]+\eta X_0\right\rbrace\right).
    \end{aligned}
\end{equation}

The last equality is a result of the following ingredients. First, we know that $c_t={X_0\epsilon}\big({\EE\big[\int_0^TM_t\mathrm{d}t\big]}\big)^{-1}$ for $\epsilon\in\left(0,1\right)$ is a strictly feasible solution to the static formulation of the consumption problem. Second, we have that $h_t>0$ and $c_t>0$. Hence, $c_t=e^{\log c_t}$ and $h_t=e^{\log h_t}$. Using this, we derive that $\EE\big[\int_0^TU\big(t,\frac{c_t}{h_t}\big)\mathrm{d}t\big]$ is strictly concave in $-\log c_t\in L^2\left(\Omega\times\left[0,T\right]\right)$. Similarly, we have that $\eta \EE\big[\int_0^Tc_tM_t\mathrm{d}t\big]$ is strictly convex in $-\log c_t\in L^2\left(\Omega\times\left[0,T\right]\right)$. Third, by the assumptions imposed upon $U$, and the fact that $\log c_t,\log h_t\in L^2\left(\Omega\times\left[0,T\right]\right)$, it holds that $\EE\big[\int_0^TU\big(t,\frac{c_t}{h_t}\big)\mathrm{d}t\big]<\infty$. These properties validate the last equality, cf.~Theorem 1 on page 217 of \citet{luenberger1997optimization}. The step from \eqref{eq:lem1.eq1} to \eqref{eq:lem1.main2} is trivial using that $c_t=e^{\log c_t}$ and $h_t=e^{\log h_t}$.
\end{proof}

To align our notation with the one of Proposition \ref{fenchelduality}, we should have:
\begin{equation}
    d^{*}=\sup_{-\log c_t\in L^2\left(\Omega\times\left[0,T\right]\right)}\mathcal{J}\left(X_0,-\log c_t,\eta\right).
\end{equation}
Accordingly, in the nomenclature of the aforementioned proposition, we have that $y^{*}=-\log c_t$ and $Y=L^2\left(\Omega\times\left[0,T\right]\right)$, which is a Banach space. Moreover, in terms of the functions $f^{*}$ and $g^{*},$ and the mapping $A$, we must have the following: 
\begin{equation}\label{eq:f.eq1}
\begin{aligned}
   -f^{*}\left(A^*y^*\right)&=\EE\left[\int_0^TU\left(t,e^{-A^{*}\left(-\log c_t\right)}\right)\mathrm{d}t\right]\\
    -g^*\left(-y^*\right)&=-\eta \EE\left[\int_0^Te^{\log c_t}M_t\mathrm{d}t\right]+\eta X_0,
\end{aligned}
\end{equation}
where the linear map $A^{*}$ is given by:
\begin{equation}
    A^{*}\left(-\log c_t\right)=-\log c_t+\beta \int_0^te^{-\alpha \left(t-s\right)}\log c_s\mathrm{d}s.
\end{equation}
Clearly, $A^{*}:L^2\left(\Omega\times\left[0,T\right]\right)\rightarrow L^2\left(\Omega\times\left[0,T\right]\right)$. Therefore, by adjointess arguments, we must have that $A:L^2\left(\Omega\times\left[0,T\right]\right)\rightarrow L^2\left(\Omega\times\left[0,T\right]\right)$, too. 

According to the equations for $f^{*}$ and $g^{*}$ in \eqref{eq:f.eq1}, we ought to have: 
\begin{equation}
    \begin{aligned}
    f^{*}\left(x\right)&=-\EE\left[\int_0^TU\left(t,e^{-x_t}\right)\mathrm{d}t\right]\\
    g^{*}\left(x\right)&=\eta \EE\left[\int_0^Te^{x_t}M_t\mathrm{d}t\right]-\eta X_0.
    \end{aligned}
\end{equation}
To obtain the definitions of $f$ and $g$ in \eqref{eq:fenchel.eq1}, we recall that: $f^{*}\left(x\right)=\sup_{z\in X}\left\lbrace \left\langle x,z\right\rangle-f\left(z\right)\right\rbrace$ and $g^{*}\left(y\right)=\sup_{z\in Y}\left\lbrace \left\langle y,z\right\rangle-g\left(z\right)\right\rbrace$, for all $x\in X^{*}$ and $y\in Y^{*}$. Correspondingly, in our case: $-f\left(z_1\right)=\inf_{x\in X^{*}}\left\lbrace f^{*}\left(x\right)-\left\langle x,z_1\right\rangle\right\rbrace$ and $-g\left(z_2\right)=\inf_{y\in Y^{*}}\left\lbrace g^{*}\left(y\right)-\left\langle y,z_2\right\rangle\right\rbrace$, for all $z_1\in X$ and $z_2\in Y$. It is then easy to derive the following definitions of $f$ and $g$:
\begin{equation}
\begin{aligned}
    f\left(x\right)&=-\EE\left[\int_0^TV_1\left(t,x_t\right)\mathrm{d}t\right]\\
    g\left(x\right)&=-\EE\left[\int_0^T\eta M_t{V}_2\left(\frac{x_t}{\eta M_t}\right)\mathrm{d}t\right]+\eta X_0.
\end{aligned}
\end{equation}

We observe that $X=Y=L^2\left(\Omega\times\left[0,T\right]\right)$, and that the preceding definitions of $f:X\rightarrow\RR\cup\left\lbrace \infty\right\rbrace$ and $g:Y\rightarrow\RR\cup\left\lbrace \infty\right\rbrace$ constitute two continuous and convex functions. Furthermore, we find that $A$, i.e.~the adjoint of $A^{*}$, is given by the following linear mapping:
\begin{equation}
    A x_t = x_t-\beta \EE\left[\int_t^Te^{-\alpha\left(s-t\right)}x_s\mathrm{d}s\cond\eF_t\right].
\end{equation}
Note here that:
\begin{equation}\label{eq:app.ineqset}
\begin{aligned}
    \left\Vert A x_t\right\Vert_{L^2\left(\Omega\times\left[0,T\right]\right)}&\leq \left\Vert x_t\right\Vert_{L^2\left(\Omega\times\left[0,T\right]\right)}+\beta\left\Vert \EE\left[\int_t^Te^{-\alpha\left(s-t\right)}x_s\mathrm{d}s\cond\eF_t\right]\right\Vert_{L^2\left(\Omega\times\left[0,T\right]\right)}\\&\leq \left\Vert x_t\right\Vert_{L^2\left(\Omega\times\left[0,T\right]\right)}+\beta\EE\left[\int_0^T \EE\left[\int_t^Te^{-2\alpha\left(s-t\right)}x_s^2\mathrm{d}s\cond\eF_t\right]\mathrm{d}t\right]^{\frac{1}{2}}\\&=\left\Vert x_t\right\Vert_{L^2\left(\Omega\times\left[0,T\right]\right)}+\frac{1}{2}\frac{\beta}{\alpha}\left\Vert x_t\left(1-e^{-2\alpha t}\right)^{\frac{1}{2}}\right\Vert_{L^2\left(\Omega\times\left[0,T\right]\right)}\leq \frac{3}{2}\left\Vert x_t\right\Vert_{L^2\left(\Omega\times\left[0,T\right]\right)}.
\end{aligned}
\end{equation}
The first inequality is due to the triangle inequality; the second inequality is a result of H\"{o}lder's inequality; the final inequality is trivial ($1-e^{-2\alpha t}<1$ for all $t\in\left[0,T\right]$). As a consequence of \eqref{eq:app.ineqset}, we know that $A:X\rightarrow Y$ is a bounded linear map. 

Considering Proposition \ref{fenchelduality}, we note that $A\ \mathrm{dom} \ f \ \cap \ \mathrm{cont} \  g=\left(L^2\left(\Omega\times\left[0,T\right]\right)\cap \RR\right)\cap \left(L^2\left(\Omega\times\left[0,T\right]\right)\cap \RR_+\right)\neq\phi$. Hence, by Proposition \ref{fenchelduality}, we have strong duality, which finalises -- via Lemma \ref{lemma1} -- the proof of Theorem \ref{thm4.1}:  
\begin{equation}
    d^{*}=\sup_{-\log c_t\in L^2\left(\Omega\times\left[0,T\right]\right)}\mathcal{J}\left(X_0,-\log c_t,\eta\right)=p^{*}=\inf_{\psi_t\in L^2\left(\Omega\times\left[0,T\right]\right)}\mathcal{V}\left(X_0,\psi_t,\eta\right).
\end{equation}

\subsection{Duality Relations}\label{sec4.3}
For convex optimisation problems, duality theory can be employed to disclose the relationship between the primal and dual controls, i.e.~the duality relation. This duality relation infers how the primal controls analytically depend on the dual controls, and vice versa. The key characteristic of this relation is that it yields the optimal primal (dual) controls after insertion of the optimal dual (primal) controls (respectively). Therefore, the duality relation contains important information about the analytical structure of the optimal primal and dual variables. In addition to this, it provides an alternative view on the mechanisms that are involved with optimising the primal and dual problems. As the dual in \eqref{eq:sec4.2.dual} follows from Fenchel Duality rather than from the Legendre transform, its implied duality relations differ from the conventional ones. In fact, the duality relations\footnote{These duality relations follow from the fact that the primal and dual objectives, in \eqref{eq:sec2.2dynprob} and \eqref{eq:sec4.2.dual}, are conjugate to each other. This concretely means that these expressions bind in the ``point'' outlined by \eqref{eq:dualrelation}, conditional on $\eta X_0=\eta \EE\big[\int_0^Tc_tM_t\mathrm{d}t\big]$ being true. Note that the duality relation in \eqref{eq:dualrelation} corresponds to the dual in \eqref{eq:sec4.2.dual}. For the alternative, howbeit identical, representation in \eqref{eq:sec3.2.dual}, the duality relations read: $c_t^{*}=\frac{\nu_t+\left(\alpha-\beta\right)\EE\big[\int_t^T\nu_s\mathrm{d}s \ \big| \ \eF_t \big]}{\eta M_t}$ and $\frac{c_t^{*}}{h_t^{*}}=I\big(t, \nu_t+\alpha\EE\big[\int_t^T\nu_s\mathrm{d}s\ \big| \ \eF_t\big]\big)$ for all $t\in\left[0,T\right]$.} for the problems in \eqref{eq:sec2.2dynprob} and \eqref{eq:sec4.2.dual} are for all $t\in\left[0,T\right]$ given by: 
\begin{equation}\label{eq:dualrelation}
    c_t^{*}=\frac{\psi_t-\beta\EE\left[\int_t^Te^{-\alpha\left(s-t\right)}\psi_s\mathrm{d}s\cond\eF_t\right]}{\eta M_t}\quad\mathrm{and}\quad \frac{c_t^{*}}{h_t^{*}}=I\left(t,\psi_t\right)
\end{equation}

In a technical sense, the duality relation for consumption in \eqref{eq:dualrelation}, $c_t^{*}$, can be regarded as a specification of optimal consumption in some auxiliary (artificial) market. To ensure that consumption defined by this relation is admissible and optimal in the true market, the dual problem in \eqref{eq:sec4.2.dual} aims to characterise this identity for $c_t^{*}$ in such a manner that it generates the habit level in \eqref{eq:dualrelation}, $h_t^{*}$. In an economic sense, we note that dual-implied consumption $c_t^{*}$ is endowed with a ``penalty term''. Concretely, selecting high values for $\psi_t$ at future dates, requires one to increase $\psi_t$ today so as to arrive at similar utility levels. This mechanism inversely reflects the agent's viewpoint in the primal problem. Namely, if this agent selects high values for $c_t$ today, via $h_t$, he/she is required to increase $c_t$ even further to maintain similar utility levels. To obtain some insights into the role that $\psi_t$ plays in minimising $\mathcal{V}$, we now conclude with Example \ref{ex1}.
\begin{example}\label{ex1}
Suppose that $\alpha=\beta=0$. Then, minimisation of the dual results in: 
\begin{equation}\label{eq:ex1.eq1}
    \psi_t^{\mathrm{opt}}=\widehat{I}\left(t,\eta^{\mathrm{opt}} M_t\right)\eta^{\mathrm{opt}} M_t,
\end{equation}
for all $t\in\left[0,T\right]$, where $\widehat{I}:\left[0,T\right]\times\RR_+\rightarrow\RR_+$ outlines the inverse of marginal utility: $U'\big(t,\widehat{I}\left(t,x\right)\big)=x$, for all $t\in\left[0,T\right]$ and $x\in\RR_+$. We assume that $M_t$ is such that $\widehat{I}\left(t,\eta^{\mathrm{opt}} M_t\right)\eta^{\mathrm{opt}} M_t\in L^2\left(\Omega\times\left[0,T\right]\right)$. Here, $\psi_t^{\mathrm{opt}}$ denotes the optimal dual process, and $\eta^{\mathrm{opt}}$ represents the corresponding dual-optimal constant. In particular, $\eta^{\mathrm{opt}}$ can be obtained from solving $\EE\big[\int_0^T\frac{\psi_t^{\mathrm{opt}}}{\eta^{\mathrm{opt}} M_t}\mathrm{d}t\big]=X_0$ for $\eta^{\mathrm{opt}}$. From the duality relations provided in \eqref{eq:dualrelation}, we know that $\psi_t^{\mathrm{opt}}$ should generate $c_t^{\mathrm{opt}}$ via $c_t^{*}=\frac{\psi_t-\beta\EE\left[\int_t^Te^{-\alpha\left(s-t\right)}\psi_s\mathrm{d}s\cond\eF_t\right]}{\eta M_t}$. Using that $c_t^{*}=\frac{\psi_t}{\eta M_t}$ for $\alpha=\beta=0$, we therefore find that optimal consumption is given by: 
\begin{equation}\label{eq:ex1.eq2}
    c_t^{\mathrm{opt}}=\frac{\psi_t^{\mathrm{opt}}}{\eta^{\mathrm{opt}} M_t}=\widehat{I}\left(t,\eta^{\mathrm{opt}} M_t\right),
\end{equation}
for all $t\in\left[0,T\right]$. Moreover, in the optimum characterised by $c_t^{\mathrm{opt}}$ and $\psi_t^{\mathrm{opt}}$, the value for $\eta^{\mathrm{opt}}\in\RR_+$ is determined such that $\EE\big[\int_0^Tc_t^{\mathrm{opt}}M_t\mathrm{d}t\big]=X_0$ holds. Hence, it is clear that $c_t^{\mathrm{opt}}$ in \eqref{eq:ex1.eq2} coincides with optimal consumption in the no-habit case ($\alpha=\beta=0$). Consider the well-known power utility function, $U\left(t,x\right)=\frac{x^{1-\gamma}}{1-\gamma}$, for $\gamma>1$ and all $x\in\RR_+$. Then, following \eqref{eq:ex1.eq2}, optimal consumption lives by its conventional definition: $c_t^{\mathrm{opt}}=\left(\eta^{\mathrm{opt}} M_t\right)^{-\frac{1}{\gamma}}$. See for instance \citet{merton1971optimum} for a similar representation of $c_t^{\mathrm{opt}}$.
\end{example}

\section{{Financial Application}}\label{sec6}
{We conclude this paper with a relevant financial application of the duality result in Theorem \ref{thm4.1}. This application concerns the design of a dual-control method for evaluating the accuracy of approximations to the optimal solutions of \eqref{eq:sec2.2dynprob}. As far as we know, such a method has not been developed for frameworks involving multiplicative habit formation. We break this section down into three parts. First, we provide the general evaluation mechanism and comment on related technicalities. Second, apart from the one proposed by \citet{van2020consumptiona}, we present a new approximation. The latter approximation is predicated on the duality relations inherent in our strong duality result. Third, we make use of the evaluation mechanism to analyse the approximations' accuracy. We show that our approximation can outperform \citet{van2020consumptiona}'s.}

\subsection{Dual-Control Method}\label{sec6.1}
{To quantify the accuracy of numerical or closed-form approximations to the optimal solutions of \eqref{eq:sec2.2dynprob}, we develop a novel evaluation mechanism or dual-control method. This mechanism is predicated on the approximating techniques proposed by \citet{bick2013solving} and \citet{kamma2021near}. These techniques are developed for investment-consumption problems formulated in financial markets with trading constraints. Even though our setup does not involve such constraints, the core principle of these methods can be used for the design of our evaluating routine. This principle concretely employs the notion of strong duality to observe that any departure from the optimal primal and/or dual controls results in a duality gap. The magnitude of the duality gap can in turn be interpreted as a reliable indicator for the accuracy of specific primal and/or dual approximations. Adapted to our situation and Theorem \ref{thm4.1}, this means the following. The primal value function, $J\left(X_0,\left\lbrace c_t,\pi_t\right\rbrace\right)$, delivers a \textit{lower bound} on the optimal dual value function for each admissible trading-consumption pair, $\left\lbrace c_t,\pi_t\right\rbrace_{t\in\left[0,T\right]}\in\mathcal{A}_{X_0}$. Likewise, the dual value function, $\mathcal{V}\left(X_0,\psi_t,\eta\right)$, spawns an \textit{upper bound} on the optimal primal value function for each feasible pair, $\left(\eta,\psi_t\right)\in\RR_+\times L^2\left(\Omega\times\left[0,T\right]\right)$. To be more precise, for all $X_0\in\RR_+$, $\left\lbrace c_t,\pi_t\right\rbrace_{t\in\left[0,T\right]}\in\mathcal{A}_{X_0}$ and $\left(\eta,\psi_t\right)\in\RR_+\times L^2\left(\Omega\times\left[0,T\right]\right)$, we have:}
{\begin{equation}\label{eq:sec4.2.ineq}
\begin{aligned}
    J\left(X_0,\left\lbrace c_t,\pi_t\right\rbrace\right)&\leq J\left(X_0,\left\lbrace c_t^{\mathrm{opt}},\pi_t^{\mathrm{opt}}\right\rbrace\right) \\&=\mathcal{V}\left(X_0,\psi_t^{\mathrm{opt}},\eta^{\mathrm{opt}}\right)\leq \mathcal{V}\left(X_0,\psi_t,\eta\right).
\end{aligned}
\end{equation}}
{\indent Theorem \ref{thm4.1} infers that the inequality in \eqref{eq:sec4.2.ineq} binds if both $\left( c_t,\pi_t\right)=\big( c_t^{\mathrm{opt}},\pi_t^{\mathrm{opt}}\big)$ and $\left(\eta,\psi_t\right)=\big(\eta^{\mathrm{opt}},\psi_t^{\mathrm{opt}}
\big)$ hold true. Here, $\big( c_t^{\mathrm{opt}},\pi_t^{\mathrm{opt}}\big)$ and $\big(\eta^{\mathrm{opt}},\psi_t^{\mathrm{opt}}
\big)$ represent the optimal primal and dual control variables, respectively.  As a result of strong duality, the difference between $J$ and $\mathcal{V}$ grows, the farther $\left\lbrace c_t,\pi_t\right\rbrace_{t\in\left[0,T\right]}$ and/or $\left(\eta,\psi_t\right)$ are situated from the optima. We can employ this observation to gauge the accuracy of particular approximations as follows. Suppose that $\left\lbrace c_t',\pi_t'\right\rbrace_{t\in\left[0,T\right]}\in\mathcal{A}_{X_0}$ represents an arbitrary admissible trading-consumption pair. Similarly, assume that $\left(\eta',\psi_t'\right)\in\RR_+\times L^2\left(\Omega\times\left[0,T\right]\right)$ specifies a feasible pair of dual controls. Then, $D\left(X_0\right)=\mathcal{V}\left(X_0,\psi_t',\eta'\right)-J\left(X_0,\left\lbrace c_t',\pi_t'\right\rbrace\right)$ characterises for all $X_0\in\RR_+$ the corresponding non-negative duality gap. The economic interpretation of the magnitude of $D\left(X_0\right)\in\RR_+$ alone is challenging, given its representation as a utilitarian quantity. We therefore compute its monetary equivalent in the form of a welfare loss. This monetary expression of the duality gap, $\mathcal{C}\in\RR_+$, is calculated in the following way:}
{\begin{equation}\label{eq:sec4.2.eq2}
    J\left(X_0,\left\lbrace c_t',\pi_t'\right\rbrace\right)=\mathcal{V}\left(X_0\left[1-\mathcal{C}\right],\psi_t',\eta'\right).
\end{equation}}
{In literal terms, $\mathcal{C}\in\RR_+$ represents the proportion of $X_0$ that one needs to deduct from the initial endowment in $\mathcal{V}$ to ``close'' the duality gap. From a strictly financial perspective, $\mathcal{C}$ can be interpreted as a monetary fee that one pays to gain access to the unknown optimal trading-consumption pair. In \citet{bick2013solving} and \citet{kamma2021near}, a comparable interpretation is adopted. Similarly, the quantity can be considered a measure of welfare loss arising from the implementation of potentially sub-optimal controls. We emphasise that $\mathcal{C}$ grows with the magnitude of $D\left(X_0\right)$, and thus with the difference(s) between $\left\lbrace c_t',\pi_t'\right\rbrace_{t\in\left[0,T\right]}$ and $\left\lbrace c_t^{\mathrm{opt}},\pi_t^{\mathrm{opt}}\right\rbrace_{t\in\left[0,T\right]}$, as well as between $\left(\eta',\psi_t'\right)$ and $\left(\eta^{\mathrm{opt}},\psi_t^{\mathrm{opt}}
\right)$. The magnitude of $\mathcal{C}$ consequently outlines an apt metric for assessing the accuracy of primal and/or dual approximations. It is noteworthy that this evaluation approach is entirely analytical and in principle free from any numerical or computational burden.} 

\subsection{Analytical Approximations}\label{sec6.2}
{To illustrate the effectiveness of our dual-control method, we present two analytical approximations to optimal (ratio) consumption, $\widehat{c}_t^{\mathrm{opt}}$. The first approximation is the one provided in Theorem 3.1 of \citet{van2020consumptiona}. Their approximation is based on a first-order Taylor expansion around the ``point'' $\left\lbrace\widehat{c}_t\right\rbrace_{t\in\left[0,T\right]}=1$ of the budget constraint in the static representation of \eqref{eq:sec2.2dynprob}. Note that this budget constraint assumes the following form: $\EE\big[\int_0^TM_t\widehat{c}_th_t\mathrm{d}t\big]\leq X_0$. The primary motivation for such an expansion is that the habit level closely tracks optimal consumption.  Inspired by this approach, we propose an alternative approximation. This approximation is based on a first-order parameter expansion of $\psi_t^{\mathrm{opt}}$ around $\alpha=\beta=0$. Setting $\alpha=\beta=0$ recovers the no-habit solutions, cf.~Example \ref{ex1}. The underlying rationale is that the optimal controls are roughly equal to the no-habit solutions. As the approximation by \citet{van2020consumptiona} relies on power utility, we henceforth assume that $U\left(t,x\right)=e^{-\delta t}\frac{x^{1-\gamma}}{1-\gamma}$ for all $x\in\RR_+$ and $t\in\left[0,T\right]$. Here, $\delta\in\RR_+$ represents the agent's time-preference parameter and $\gamma\in\RR_+$ stands for the coefficient of relative risk aversion. Observe that $U\left(t,x\right)=e^{-\delta t}\log x$ for all $x\in\RR_+$ and $t\in\left[0,T\right]$ when $\gamma=1$. We present the approximations in the following overview:}\\
\begin{itemize}
    \item {\textit{(Approximation by \citet{van2020consumptiona})}. This approximation, $\left\lbrace\widehat{c}_{BBL,s}\right\rbrace_{t\in\left[0,t\right]}$, is derived from a first-order Taylor expansion of the static  budget constraint around the ``point'' $\left\lbrace\widehat{c}_t\right\rbrace_{t\in\left[0,T\right]}=1$. Let $\eta^{*}_{BBL}\in\RR_+$ be a parameter that ensures $\EE\big[\int_0^TM_t\widehat{c}_{BBL,t}h_{BBL,t}\mathrm{d}t\big]=X_0$. Here, $h_{BBL,t}$ spells out the habit level generated by the approximation $\left\lbrace\widehat{c}_{BBL,s}\right\rbrace_{t\in\left[0,t\right]}$. The approximation then reads:}
    \begin{equation}\label{eq:bblnew}
    \widehat{c}_{BBL,t}=\left(\eta^{*}_{BBL}e^{\delta t}M_t\left\lbrace 1+\beta\EE\left[\int_t^Te^{-\left[\alpha-\beta\right]\left(s-t\right)}\frac{M_s}{M_t}\mathrm{d}s\cond\eF_t\right]\right\rbrace\right)^{-\frac{1}{\gamma}}.
\end{equation}

\item {\textit{(Dual-based approximation).} Our approximation, $\left\lbrace\widehat{c}_{dual,t}\right\rbrace_{t\in\left[0,T\right]}$, is based on a first-order parameter expansion of ${\psi}_t^{\mathrm{opt}}$ around the point $\alpha=\beta=0$. Suppose that $\widehat{P}_t=\beta\int_0^t\big[\log\big\{\left(\eta^{*}_{dual}M_s\right)\left[e^{\delta s}\eta^{*}_{dual}M_s\right]^{-\frac{1}{\gamma}}\big\}+\delta s\big]\mathrm{d}s+\left(\gamma\alpha+\left[{1-\gamma}\right]\beta\right)\big({M_t\left[e^{\delta t}M_t\right]^{-\frac{1}{\gamma}}}\big)^{-1}\times\EE\big[\int_t^T{M_s\left[e^{\delta s}M_s\right]^{-\frac{1}{\gamma}}}\mathrm{d}s\ \big| \ \eF_t\big]$ holds. Here, $\eta^{*}_{dual}\in\RR_+$ will be defined subsequently. In addition to this, set $\theta_t^{*}=\left(\eta^{*}_{dual}M_t\right)\big[\eta^{*}_{dual}e^{\delta t+\widehat{P}_t}M_t\big]^{-\frac{1}{\gamma}}$. The parameter expansion of $\psi_t^{\mathrm{opt}}$ around $\alpha=\beta=0$ then results in the following dual approximation: } 
\begin{equation}
    \psi_t^{*}=\left(\eta^{*}_{dual}M_t\right)\left[\eta^{*}_{dual}e^{\delta t+\widehat{P}_t}M_t\right]^{-\frac{1}{\gamma}}e^{\alpha \EE\left[\int_t^T\frac{\theta_s^{*}}{\theta_t^{*}}\mathrm{d}s\cond\eF_t\right]}
\end{equation}
{In order to retrieve a corresponding approximation to optimal (ratio) consumption, we make use of the second duality relation in \eqref{eq:dualrelation}. By inserting the latter expression for $\psi_t^{*}$ into the right-hand side of this relation, we obtain an analytical expression for the primal equivalent of this dual approximation, i.e.~$\left\lbrace\widehat{c}_{dual,t}\right\rbrace_{t\in\left[0,T\right]}$. To this end, suppose that $Q_{t}=\big({M_t\left[e^{\delta t}M_t\right]^{-\frac{1}{\gamma}}}\big)^{-1}\EE\big[\int_t^T{M_s\left[e^{\delta s}M_s\right]^{-\frac{1}{\gamma}}}\mathrm{d}s\ \big| \ \eF_t\big]-\EE\left[\int_t^T\frac{\theta_s^{*}}{\theta_t^{*}}\mathrm{d}s\cond\eF_t\right]$ holds. Additionally, assume that $P_t=\int_0^t\log\left(\eta^{*}_{dual}e^{\delta s}M_s\right)^{-\frac{1}{\gamma}}\mathrm{d}s+\big({M_t\left(e^{\delta t}M_t\right)^{-\frac{1}{\gamma}}}\big)^{-1}\EE\big[\int_t^T{M_s\left(e^{\delta s}M_s\right)^{-\frac{1}{\gamma}}}\mathrm{d}s \ \big| \  \eF_t\big]$. Furthermore, let $\eta^{*}_{dual}$ be a parameter that ensures $\EE\big[\int_0^TM_t\widehat{c}_{dual,t}h_{dual,t}\mathrm{d}t\big]=X_0$. Here, $h_{dual,t}$ spells out the habit level generated by $\left\lbrace\widehat{c}_{dual,s}\right\rbrace_{t\in\left[0,t\right]}$. The approximation then reads:} 
\begin{equation}\label{eq:prop3.eq1}
    \widehat{c}_{dual,t}=\left(\eta^{*}_{dual}e^{\delta t+\beta P_{t}+\frac{\alpha\gamma}{1-\gamma} Q_{t}}M_t\right)^{-\frac{1}{\gamma}}.
\end{equation}
\end{itemize}

{We conclude this section by making three remarks. First, we note that the approximations are presented in terms of \textit{ratio} consumption. In order to retrieve the corresponding expressions for consumption itself, we can make use of the habit component. More specifically, $h_{i,t}=e^{\beta\int_0^te^{-\left[\alpha-\beta\right]\left(t-s\right)}\log \widehat{c}_{i,s}\mathrm{d}s}$ holds for $i\in\left\lbrace BBL, dual\right\rbrace$. Approximate consumption can then be obtained from the following relation: $c_{i,t}'=\widehat{c}_{i,t}h_{i,t}$ for $i\in\left\lbrace BBL, dual\right\rbrace$. Second, as the approximations satisfy the static budget constraint by construction, we know that there exist matching trading strategies that are admissible. However, for a given admissible investment-consumption pair, the trading strategy does not affect the magnitude of the value function. We therefore omit them from this section. Third and last, to obtain an upper bound for the use of our dual-control method, we are in need of (approximate) dual controls. For this purpose, we can resort to the duality relations in \eqref{eq:dualrelation}. On the grounds of the first relation, we can derive the following for our approximations: $\psi_{i,t}'=\eta^{*}_i M_tc_{i,t}'+\beta\EE\big[\int_t^Te^{-\left[\alpha-\beta\right]\left(s-t\right)}\eta^{*}_i M_sc_{i,s}'\mathrm{d}s\big]$ for $i\in\left\lbrace BBL, dual\right\rbrace$. Here, $\eta^{*}_i$ is for $i\in\left\lbrace BBL, dual\right\rbrace$ determined such that $\EE\big[\int_0^TI\left(t,\psi_{i,s}'\right)M_se^{\beta\int_0^se^{-\left[\alpha-\beta\right]\left(s-u\right)}\log I\left(u,\psi_{i,u}'\right)\mathrm{d}u}\mathrm{d}s \ \big| \ \eF_t\big]\big)\mathrm{d}t\big]=X_0$ holds true. The approximate dual pair $\left(\psi_{i,t}',\eta^{*}_i\right)$ can then be inserted into $\mathcal{V}$, which will render an upper bound on the approximate primal value function. As we will accordingly obtain two upper bounds, in our numerical example we select the smallest one.}  

\subsection{Numerical Results}
{We evaluate the accuracy of the two approximations shown in section \ref{sec6.2}, using the evaluation mechanism of section \ref{sec6.1}. To this end, we set $N=1$ in the market model, $\mathcal{M}$, and fix $r_t=r$, $\sigma_t=\sigma$, $\mu_t=\mu$, where $r,\sigma$ and $\mu$ are constants. Based upon the parameter initialisation in \citet{van2020consumptiona}, we define: $X_0=20$ $T=10$, $\gamma=10$, $\delta=0.03$, $\alpha=\beta=0.1$, $\mu=0.05$, $r=0.01$ and $\sigma=0.2$. In Table \ref{tab1}, we present the upper bounds on the welfare losses ($\mathcal{C}$) associated with the approximations, for different values of $\gamma$, $X_0$, $\alpha=\beta$ and $T$. We compute the welfare losses on the basis of the equality displayed in \eqref{eq:sec4.2.eq2}. The primal value function, $J$, in this identity follows directly from the approximations in section \ref{sec6.2}. As addressed at the end of the latter section, we make use of the first duality relation in \eqref{eq:dualrelation} to compute $\mathcal{V}$. Namely, the primal approximations generate via this relation specific dual approximations. These dual approximations in turn result in different values for the dual value function, $\mathcal{V}$. We choose the smallest $\mathcal{V}$ amongst the ensuing two upper bounds in order to make the duality gap as tight as possible. As a consequence, per set of parameter values, we rely on the same upper bound to compute the welfare losses.} 

\begin{table}[!t]\centering
\begin{adjustbox}{max width=\textwidth}
\begin{threeparttable}
\def\sym#1{\ifmmode^{#1}\else\(^{#1}\)\fi}
\sisetup{table-space-text-post = \sym{***}}
{\small
\begin{tabular}{@{\extracolsep{5pt}}l*{11}{S[table-align-text-post=false]}}
\toprule[2.5pt]
& \multicolumn{3}{c}{Coefficient of risk-aversion ($\gamma$)}  & \multicolumn{3}{c}{Initial endowment ($X_0$)}   \\
\cline{2-4}\cline{5-7}
\addlinespace               
\multicolumn{1}{c}{} &\multicolumn{1}{c}{$6$}&\multicolumn{1}{c}{$10$}&\multicolumn{1}{c}{$14$}&\multicolumn{1}{c}{$10$}&\multicolumn{1}{c}{$20$}&\multicolumn{1}{c}{$30$}  \\
\midrule
\addlinespace
 \multicolumn{1}{c}{$\widehat{c}_{BBL,t}$}& 0.357\%  & 0.204\%   & 0.149\% & 0.094\%    & 0.204\%   & 0.354\%  \\
  \addlinespace
 \multicolumn{1}{c}{$\widehat{c}_{dual,t}$}& 0.225\%   & 0.124\% & 0.091\% & 0.129\%    & 0.124\%  & 0.207\% \\
\addlinespace
\addlinespace
  \toprule
& \multicolumn{3}{c}{Speed of mean-reversion ($\alpha=\beta$)}  & \multicolumn{3}{c}{Trading horizon ($T$)}   \\
\cline{2-4}\cline{5-7}
\addlinespace               
\multicolumn{1}{c}{} &\multicolumn{1}{c}{$0.01$}&\multicolumn{1}{c}{$0.1$}&\multicolumn{1}{c}{$0.2$}&\multicolumn{1}{c}{$1$}&\multicolumn{1}{c}{$10$}&\multicolumn{1}{c}{$20$}  \\
\midrule
\addlinespace
 \multicolumn{1}{c}{$\widehat{c}_{BBL,t}$}& 0.003\%  & 0.204\%  & 0.782\% & 0.039\%    & 0.204\%    & 0.523\% \\
  \addlinespace
 \multicolumn{1}{c}{$\widehat{c}_{dual,t}$}& 0.001\% &   0.123\%   & 0.619\% & 0.005\%    & 0.124\%  & 0.636\% \\
\addlinespace
\addlinespace
\midrule  

\end{tabular}}
\end{threeparttable}
\end{adjustbox}
\caption{\label{tab1}{\textbf{Upper bounds on welfare losses ($\mathcal{C}$).} This table reports the upper bounds on the welfare losses corresponding to the approximate solutions provided in section \ref{sec6.2}, i.e. $\widehat{c}_{BBL,t}$ in \eqref{eq:bblnew} and $\widehat{c}_{dual,t}$ in \eqref{eq:prop3.eq1}. These welfare losses are calculated by solving \eqref{eq:sec4.2.eq2} for $\mathcal{C}$. The lower bounds ($J$) in \eqref{eq:sec4.2.eq2} are computed by inserting the approximate primal solutions into the primal value function. The upper bound ($\mathcal{V}$) in \eqref{eq:sec4.2.eq2} is set equal to the smallest dual value function, implied by the approximations via the first relation in \eqref{eq:dualrelation}. The table expresses $\mathcal{C}$ in terms of percentages (\%), for different values of the four displayed parameters ($\gamma$, $X_0$, $\alpha=\beta$ and $T$), under a baseline initialisation of the parameters. This baseline set is fixed as follows: $X_0=20$, $T=10$, $\gamma=10$, $\delta=0.03$, $\alpha=\beta=0.1$, $\mu=0.05$, $r=0.01$ and $\sigma=0.2$. The results are based on $10,000$ simulations and an Euler scheme with $40$ equidistant time-steps.}}
\end{table}

{Table \ref{tab1} shows that the maximal welfare losses generated by the three approximations vary between 0.001\% and 0.782\%. Bearing in mind that these numbers constitute upper bounds on the true errors, we can conclude that the approximations are near-optimal. With regard to the performance of $\widehat{c}_{BBL,t}$, this finding coincides with the numerical evidence reported in \citet{van2020consumptiona}. For their approximation, under a slightly different initialisation of parameters, these authors namely demonstrate that the corresponding welfare losses take on values between 0.08\% and 1.27\%. Concerning our dual-based approximation, we observe that $\widehat{c}_{dual,t}$ manages to outperform $\widehat{c}_{BBL,t}$ in 7 out of the 9 presented cases.  The two situations in which $\widehat{c}_{BBL,t}$ outperforms $\widehat{c}_{BBL,t}$ are when $T=20$ and $X_0=10$.  To explain this phenomenon, we note that $X_0=\frac{1}{r}\left(1-e^{-rT}\right)\approx T$ roughly implies that $\widehat{c}_t\approx 1$. The latter is a consequence of the static budget constraint, $\EE\big[\int_0^TM_t\widehat{c}_th_t\mathrm{d}t\big]=X_0$. As the approximation by \citet{van2020consumptiona} is predicated on a Taylor expansion around $\left\lbrace\widehat{c}_t\right\rbrace_{t\in\left[0,T\right]}=1$, it is evident that $\widehat{c}_{BBL,t}$ achieves better results in these two situations. Furthermore, since $\widehat{c}_{BBL,t}$ is strongly geared towards setups in which $X_0\approx T$ holds, it is also clear why our approximation is more accurate in the remaining 7 cases, where $X_0\neq T$. The expansions around $\alpha=\beta=0$ are namely oriented towards the no-habit solution, which tends to adapt itself to the ratio $\frac{X_0}{T}$. In summary, these illustrations of our dual-control method not only emphasise the abundant opportunities for refinement in analytical approximations but also highlight the additional insights attainable through the dual formulation of the habit formation problem.} 

\bibliographystyle{apalike}
\bibliography{APbib}
\end{document}